\documentclass[11pt]{article}

\usepackage[margin=1in]{geometry}
\usepackage[utf8]{inputenc}
\usepackage{amsmath, amssymb, amsthm}
\usepackage{mathrsfs}
\usepackage{cite}
\usepackage{hyperref}
\usepackage{authblk}
\usepackage[ruled, vlined, linesnumbered]{algorithm2e}
\usepackage{multirow}
\usepackage{graphicx}
\usepackage[labelformat=simple]{subcaption}

\usepackage{booktabs, dcolumn}
\usepackage{xcolor}
\usepackage{tikz}
\usetikzlibrary{quantikz}

\newcommand{\bs}[1]{\boldsymbol #1}
\newtheorem{definition}{Definition}
\newtheorem{theorem}{Theorem}

\newtheorem{proposition}{Proposition}
\newtheorem{problem}{Problem}
\DeclareMathOperator{\rank}{rank}

\title{Optimal Hadamard gate count for Clifford$+T$ synthesis of\\ Pauli rotations sequences}

\author[1,2]{Vivien Vandaele}
\author[1]{Simon Martiel}
\author[2]{Simon Perdrix}
\author[2]{Christophe Vuillot}
\affil[1]{Atos Quantum Lab, Les Clayes-sous-Bois, France}
\affil[2]{Université de Lorraine, CNRS, Inria, LORIA, F-54000 Nancy, France}

\date{}

\begin{document}
\maketitle

\begin{abstract}
The Clifford$+T$ gate set is commonly used to perform universal quantum computation.
In such setup the $T$ gate is typically much more expensive to implement in a fault-tolerant way than Clifford gates.
To improve the feasibility of fault-tolerant quantum computing it is then crucial to minimize the number of $T$ gates.
Many algorithms, yielding effective results, have been designed to address this problem.
It has been demonstrated that performing a pre-processing step consisting of reducing the number of Hadamard gates in the circuit can help to exploit the full potential of these algorithms and thereby lead to a substantial $T$-count reduction.
Moreover, minimizing the number of Hadamard gates also restrains the number of additional qubits and operations resulting from the gadgetization of Hadamard gates, a procedure used by some compilers to further reduce the number of $T$ gates.
In this work we tackle the Hadamard gate reduction problem, and propose an algorithm for synthesizing a sequence of $\pi/4$ Pauli rotations with a minimal number of Hadamard gates.
Based on this result, we present an algorithm which optimally minimizes the number of Hadamard gates lying between the first and the last $T$ gate of the circuit.
\end{abstract}

\section{Introduction}

Fault-tolerant quantum computing enables reliable and large-scale quantum computation at the cost of an important resource overhead when compared to an error-free model.
Much work has been put into quantum circuit optimization in order to reduce this additional cost and make fault-tolerant quantum computing more practical and scalable.
In particular, numerous algorithms have been designed to minimize the number of $T$ gates in a quantum circuit~\cite{amy2014polynomial, gosset2014algorithm, abdessaied2014quantum, amy2019t, nam2018automated, heyfron2018efficient, kissinger2020reducing, zhang2019optimizing, de2019techniques, de2020fast, munson2019and, mosca2021polynomial}.
This focus on $T$-count minimization is primarily due to the sizable amount of resources, in terms of time and number of qubits, generally required by fault-tolerance protocols, such as magic state distillation~\cite{bravyi2005universal}, to implement the $T$ gate.
In contrast, Clifford operations can typically be implemented at little expense in most common quantum error correcting codes via transversal operations, code deformation~\cite{bombin2009quantum} or lattice surgery~\cite{horsman2012surface}.
In such context, and considering the fact that the Clifford$+T$ gate set is approximatively universal, the $T$-count stands out as a key metric to minimize in order to make fault-tolerant quantum computing more efficient.
Moreover, minimizing the $T$-count is also crucial in the field of quantum circuits simulation as many simulators have a runtime that scales exponentially with respect to the number of $T$ gates~\cite{bravyi2016improved, bravyi2019simulation, qassim2021improved, kissinger2022simulating, kissinger2022classical}.

The problem of finding the optimal number of $T$ gates in a $\{\mathrm{CNOT}, S, T\}$ circuit composed of $n$ qubits has been well formalized for $\{\mathrm{CNOT}, S, T\}$ circuits by demonstrating its equivalence with the problem of finding a maximum likelihood decoder for the punctured Reed-Muller code of order $n - 4$ and length $2^n-1$~\cite{amy2019t}, which is tantamount to the third order symmetric tensor rank decomposition problem~\cite{seroussi1983maximum}.
In order to make use of this formalism in Clifford$+T$ circuits it is necessary to circumvent the Hadamard gates in some way; this can be achieved by applying one of the two following strategies.
The first method consists of extracting $\{\mathrm{CNOT}, S, T\}$ subcircuits and interposing them with layers of Hadamard gates~\cite{amy2014polynomial}.
Then an independent and Hadamard-free instance of the $T$-count minimization problem can be formulated for each $\{\mathrm{CNOT}, S, T\}$ subcircuit extracted.
The second strategy involves a measurement-based gadget which can substitute a Hadamard gate.
This Hadamard gadgetization procedure requires the following additional resources for each Hadamard gate gadgetized: an ancilla qubit, a CZ gate and a measurement~\cite{bremner2011classical}.

The number $T$ gates in a circuit containing $h$ Hadamard gates can be upper bounded by $\mathcal{O}(n^2h)$ or $\mathcal{O}((n+h)^2)$ in the case where all Hadamard gates are gadgetized~\cite{amy2019t}.
Hence, each Hadamard gate that must be circumvented, regardless of the strategy applied, for a lack of a good Hadamard gate optimization procedure is potentially the cause of missed opportunities for further $T$ gate reduction.
Therefore, a preliminary procedure consisting in reducing the number of Hadamard gates can result in an important $T$-count reduction, as demonstrated in Reference~\cite{abdessaied2014quantum}.
It has been shown that circumventing all Hadamard gates using the Hadamard gadgetization procedure is the strategy that leads to the best reduction in the number of $T$ gates~\cite{heyfron2018efficient}.
However, the main drawback of this method is the use of one additional qubit for each Hadamard gate gadgetized.
This is obviously an inconvenience if the number of qubits at disposal is limited, but can also be detrimental to the optimization process in two ways.
Firstly, as suggested in Reference~\cite{de2020fast}, it may become more difficult to find opportunities to reduce the $T$-count as the ratio between the number of qubits and the number of $T$ gates increases.
In addition, the runtime of a $T$-count optimizer can drastically increase as the number of qubits grows.
For all these reasons it is important to minimize the number of auxiliary qubits needed, which further motivates investigations into a pre-processing step optimizing the number Hadamard gates in the initial circuit.

We can mainly distinguish two strategies for the optimization of quantum circuits.
The first one is referred to as pattern matching and involves the detection of patterns of gates within the circuit to then substitute them by an equivalent, but nonetheless different, sequence of gates.
A series of transformation is therefore applied to the circuit, but its semantic is preserved at each step of the process.
This method has already been applied to the optimization of Hadamard gates by using rewriting rules that preserve or reduce the number of Hadamard gates within the circuit~\cite{abdessaied2014quantum, de2020fast}.
The second method is circuit re-synthesis which consists in extracting some parts of the circuit, representing them by higher level constructs and performing their synthesis to obtain an equivalent circuit.
This method has not yet been considered for the optimization of Hadamard gates, despite displaying excellent performances for other optimization problems such as $T$ gate reduction~\cite{zhang2019optimizing, heyfron2018efficient}.

In the case of circuit re-synthesis, a commonly used fact is that the operation performed by a given Clifford$+T$ circuit can be represented by a sequence of $\pi/4$ Pauli rotations followed by a final Clifford operator~\cite{gosset2014algorithm}.
A strategy for optimizing the number of Hadamard gates could then consist of synthesizing this sequence of $\pi/4$ Pauli rotations using as few Hadamard gates as possible.
In Section~\ref{sec:h_opt}, we present an algorithm that solves this problem optimally.
With the Hadamard gadgetization approach, a Hadamard gate needs to be gadgetized only if it comes after and precedes a $T$ gate in the circuit, we say that such Hadamard gates are internal Hadamard gates.
This leads to a more specific Hadamard gate reduction problem consisting in reducing the number of internal Hadamard gates within the circuit.
We tackle this problem in Section~\ref{sec:internal} by proposing an algorithm that synthesizes a sequence of Pauli rotations with a minimal number of internal Hadamard gates.
Section~\ref{sec:improving} presents alternative versions of our algorithms with lower complexities.
Benchmarks are then given in Section~\ref{sec:bench} to evaluate the performances and scalability of our algorithms on a library of reversible logic circuits and on large-scale quantum circuits.
Our algorithms are not working exclusively for the Clifford$+T$ gate set but can also be executed on any circuit composed of $\{X, \mathrm{CNOT}, S, H, R_Z\}$ gates.

\section{Preliminaries}

\subsection{Pauli rotations sequences}

The four Pauli matrices are defined as follows:
$$
I = \begin{pmatrix}
1 & 0\\
0 & 1
\end{pmatrix},\quad
X = \begin{pmatrix}
0 & 1\\
1 & 0
\end{pmatrix},\quad
Y = \begin{pmatrix}
0 & -i\\
i & 0
\end{pmatrix},\quad
Z = \begin{pmatrix}
1 & 0\\
0 & -1
\end{pmatrix}.
$$
Two Pauli matrices commute if they are equal or if one of them is the identity matrix $I$, otherwise they anticommute.
All tensor products of $n$ Pauli matrices, together with an overall phase of $\pm 1$ or $\pm i$, generate the Pauli group $\mathcal{P}_n$.
We define the subset $\mathcal{P}^*_n \subset \mathcal{P}_n$ as the set of Pauli operators which have an overall phase of $\pm 1$.
We will use $P_i$ to denote the $i$th Pauli matrix of a Pauli operator $P$, for instance if $P = Z \otimes X$ then $P_1 = Z$ and $P_2 = X$.
We say that a Pauli operator $P$ is diagonal if and only if $P_i \in \{I, Z\}$ for all $i$.
Two Pauli operators $P$ and $P'$ commute if there is an even number of indices $i$ such that $P_i$ anticommutes with $P'_i$, otherwise they anticommute.
Given a Pauli operator $P \in \mathcal{P}^*_n$ and an angle $\theta \in \mathbb{R}$, a Pauli rotation $R_P(\theta)$ is defined as follows:
$$R_P(\theta) = \exp(-i\theta P/2) = \cos(\theta/2)I - i\sin(\theta/2)P.$$
For example the $T$ gate is defined as a $\pi/4$ Pauli $Z$ rotation: 
$$T = R_Z(\pi/4)$$
Clifford gates can also be represented in terms of Pauli rotations, we will mostly make use of the CNOT, $S$ and $H$ gates defined as follows:
    \begin{align*}
        \mathrm{CNOT} &= R_{ZX}(\pi/2)R_{ZI}(-\pi/2)R_{IX}(-\pi/2), \\
        S &= R_Z(\pi/2), \\
        H &= R_{Z}(\pi/2) R_{X}(\pi/2) R_{Z}(\pi/2).
    \end{align*}
The Clifford group $\mathcal{C}_n$ is defined as the set of unitaries stabilizing $\mathcal{P}_n$:
$$\mathcal{C}_n = \{U \mid U^\dag PU \in \mathcal{P}_n,\, \forall P \in \mathcal{P}_n \}.$$
and is generated by the $\{\text{CNOT}, S, H\}$ gate set.
Note that for each pair of Pauli operators $P, P' \in \mathcal{P}_n \setminus \{I^{\otimes n}\}$ there exists a Clifford operator $U \in \mathcal{C}_n$ such that $P' = U^\dag P U$.
Unless indicated otherwise, the term Clifford circuit will refer to a circuit exclusively composed of gates from the set $\{X, \text{CNOT}, S, H\}$, the use of other Clifford gate set is discussed at the end of Section~\ref{sec:diag_opt}.

A Pauli operator $P \in \mathcal{P}^*_n$ can be encoded using $2n + 1$ bits: $2n$ bits for the $n$ Pauli matrices and $1$ bit for the sign~\cite{aaronson2004improved}.
In the following we will encode a Pauli operator $P \in \mathcal{P}^*_n$ with $2n$ bits and neglect its sign as it has no impact on the formulation of our problem; we will use the term Pauli product to designate a Pauli operator deprived of its sign.
Let $\mathcal{S} = \begin{bmatrix} \mathcal{Z} \\ \mathcal{X} \end{bmatrix}$ be a block matrix of size $2n \times m$ representing a sequence of $m$ Pauli products acting on $n$ qubits such that $\mathcal{Z}$ is the submatrix of $\mathcal{S}$ formed by its first $n$ rows and $\mathcal{X}$ is the submatrix of $\mathcal{S}$ formed by its last $n$ rows.
The value $({\mathcal{Z}}_{i,j}, \mathcal{X}_{i,j})$ represents the $i$th component of the $j$th Pauli product encoded by $\mathcal{S}$, such that the values $(0, 0), (0, 1), (1, 1)$ and $(1, 0)$ are corresponding to the Pauli matrices $I, X, Y$ and $Z$ respectively.
We use the notation $\mathcal{S}_{:,i}$ to refer to the column $i$ of the matrix $\mathcal{S}$, we will denote $P(\mathcal{S}_{:,i})$ the Pauli product encoded by $\mathcal{S}_{:,i}$, and we will say that $\mathcal{S}_{:,i}$ is diagonal if and only if $P(\mathcal{S}_{:,i})$ is diagonal and that $\mathcal{S}_{:,i}$ and $\mathcal{S}_{:,j}$ commute (or anticommute) if their associated Pauli products $P(\mathcal{S}_{:,i})$ and $P(\mathcal{S}_{:,j})$ commute (or anticommute).
Throughout the document we use the zero-based indexing for vectors and matrices and the initial element is termed the zeroth element, for instance the zeroth column of $\mathcal{S}$ is $\mathcal{S}_{:,0}$.
If all Pauli products encoded by $\mathcal{S}$ are conjugated by a Clifford gate $U \in \{\text{CNOT}, S, H\}$, then $\mathcal{S}$ can be updated to encode the Pauli products $U^\dag P(\mathcal{S}_{:,i}) U$, for all $i$, via the operations depicted in Figure~\ref{fig:clifford_operations}.
These operations are analogous to the operations performed in the tableau representation~\cite{aaronson2004improved}.
We will say that $\tilde{\mathcal{S}} = U^\dag \mathcal{S} U$ if and only if $P(\tilde{\mathcal{S}}_{:,i}) = \pm U^\dag P(\mathcal{S}_{:,i}) U$ for all $i$ and for some Clifford operator $U$.

\begin{figure}[t]
    \centering
    \begin{subfigure}{0.3\textwidth}
        \centering
        \includegraphics{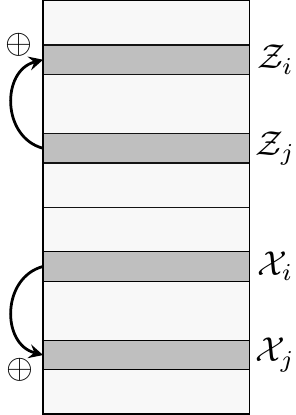}
        \caption{CNOT$_{i, j}$}
        \label{subfig:cnot}
    \end{subfigure}
    \begin{subfigure}{0.3\textwidth}
        \centering
        \includegraphics{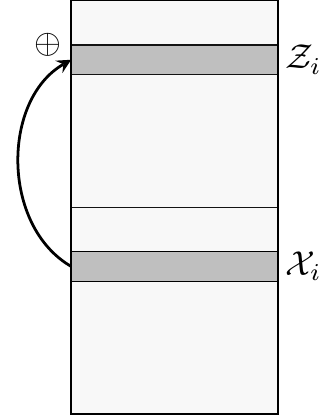}
        \caption{$S_i$}
        \label{subfig:s}
    \end{subfigure}
    \begin{subfigure}{0.3\textwidth}
        \centering
        \includegraphics{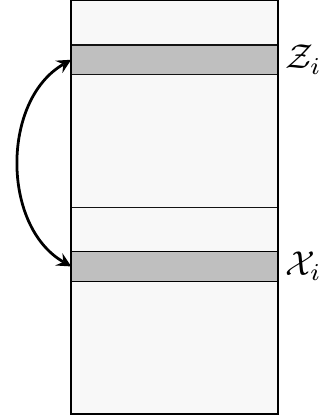}
        \caption{$H_i$}
        \label{subfig:h}
    \end{subfigure}
    \caption{
    Operations on a sequence of Pauli products $\mathcal{S}=\begin{bmatrix}\mathcal{Z}\\\mathcal{X}\end{bmatrix}$ corresponding to the conjugation of all its Pauli products by a Clifford gate.
For the CNOT$_{i,j}$ gate where $i$ is the control qubit and $j$ is the target qubit~\subref{subfig:cnot}, the $\mathcal{Z}_j$ and $\mathcal{X}_i$ rows are added to the $\mathcal{Z}_i$ and $\mathcal{X}_j$ rows respectively.
For a $S$ gate applied on qubit $i$~\subref{subfig:s}, the $\mathcal{X}_i$ row is added to the $\mathcal{Z}_i$ row.
For a $H$ gate applied on qubit $i$~\subref{subfig:h}, the $\mathcal{Z}_i$ and $\mathcal{X}_i$ rows are swapped.
}
    \label{fig:clifford_operations}
\end{figure}

Any Clifford$+R_Z$ circuit can be represented up to a global phase by a sequence of Pauli rotations followed by a final Clifford operator~\cite{gosset2014algorithm}.
The synthesis of a Pauli rotation is then a key procedure for constructing an equivalent Clifford$+R_Z$ circuit from this representation.
Let $U$ be a Clifford operator such that 
\begin{equation}\label{eq:pauli_conjugation}
U^\dag R_P(\theta)U = R_{U^\dag PU}(\theta) = R_{Z_i}(\theta)
\end{equation}
for some qubit $i$ and some Pauli operator $P \in \mathcal{P}^*_n$.
Then the synthesis of the Pauli rotation $R_P(\theta)$ can be performed by implementing $U$, $U^\dag$ and inserting a $R_Z(\theta)$ gate in between on qubit $i$.
If $P$ is diagonal then the Clifford operator $U$ satisfying Equation~\ref{eq:pauli_conjugation} can be implemented using only CNOT and $X$ gates.
Otherwise, if $P$ is not diagonal, at least one Hadamard gate is required to implement $U$ over the $\{X, \text{CNOT}, S, H\}$ gate set such that
\begin{equation}
U^\dag R_P(\theta)U = R_{P'}(\theta)
\end{equation}
where $P'$ is diagonal.
Note that the gate set considered is not minimal as the $X$ gate can be generated from the $S$ and $H$ gates.
As our cost model is the number of Hadamard gates we include the $X$ gate so that no $H$ gates are required to implement it.
The $X$ gate finds its purpose in the case where
\begin{equation}
U^\dag R_P(\theta)U = R_{U^\dag PU}(\theta) = R_{Z_i}(-\theta)
\end{equation}
by allowing the implementation of the $R_{Z}(-\theta)$ gate using the $R_{Z}(\theta)$ gate via the equality
\begin{equation}
R_{Z}(-\theta) = X R_{Z}(\theta) X.
\end{equation}
Nonetheless, in the case where $\theta = \pi/4$, the minimal $\{\mathrm{CNOT}, S, H\}$ gate set can be used since the negative sign can be compensated by inserting three $S$ gates as
\begin{equation}
UR_{Z_i}(7\pi/4)U^\dag = UR_{Z_i}(-\pi/4)U^\dag = R_{UZ_iU^\dag}(-\pi/4)=R_P(\pi/4).
\end{equation}

\subsection{Diagonalization network}

The synthesis of a sequence of Pauli rotations using the Clifford$+R_Z$ gate set implies the construction of a diagonalization network, derived from the notion of parity network established in Reference~\cite{amy2018controlled} and which is defined as follows.

\begin{definition}[Diagonalization network]
    A Clifford circuit $C$ is a diagonalization network for a sequence $\mathcal{S}$ of $m$ Pauli products if and only if there exists $m$ non-negative integers $\alpha_{0} \leq \ldots \leq \alpha_{m-1}$ such that $U_{i}^\dag P(\mathcal{S}_{:,i}) U_{i}$ is diagonal, where $U_{i}$ is the Clifford operator implemented by the first $\alpha_i$ gates of $C$.
\end{definition}

A sequence of $m$ Pauli rotations can be represented by a triple $(\mathcal{S}, \bs b, \bs \theta)$, where $\mathcal{S}$ encodes a sequence of $m$ Pauli products, $\bs b \in \{-1, 1\}^m$ and $\bs \theta \in \mathbb{R}^m$ such that $b_i$ and $\theta_i$ correspond to the sign and angle associated with the Pauli product $P(\mathcal{S}_{:,i})$.
Let $C$ be a diagonalization network for $\mathcal{S}$, then the sequence of Pauli rotations represented by $(\mathcal{S}, \bs b, \bs \theta)$ can be easily implemented from $C$ up to a final Clifford circuit by inserting $m$ $\{X, \text{CNOT}, R_Z\}$ subcircuits into $C$.
Indeed, as stated previously, if a Pauli product $P$ is diagonal then the Clifford operator $V$ satisfying 
\begin{equation}
V^\dag R_P(\theta)V = R_{V^\dag PV}(\theta) = R_{Z_j}(\theta)
\end{equation}
for some qubit $j$, can be implemented using only $\mathrm{CNOT}$ and $X$ gates.
And because $C$ is a diagonalization network for $\mathcal{S}$ then by definition there exists $m$ non-negative integers $\alpha_0 \leq \ldots \leq \alpha_{m-1}$ such that $U_i^\dag P(\mathcal{S}_{:,i})U_i$ is diagonal, where $U_i$ is the Clifford operator implemented by the first $\alpha_i$ gates of $C$.
It follows that inserting, for all $i$ and just after the $\alpha_i$th gate of $C$, a $\{\mathrm{CNOT}, X\}$ implementation of the Clifford operators $V_i$ and $V_i^\dag$ with the $R_{Z_j}(b_i \theta_i)$ gate in between, such that $V_i$ satisfies 
\begin{equation}
V_i^\dag U_i^\dag P(\mathcal{S}_{:,i})U_i V_i = R_{Z_j}(b_i \theta_i)
\end{equation}
for some qubit $j$, will result in an implementation of the sequence of Pauli rotations defined by $(\mathcal{S}, \bs b, \bs \theta)$ up to a final Clifford circuit.

The circuit obtained by this procedure obviously contains the same number of Hadamard gates as $C$ as no additional Hadamard gate was inserted.
Thus, synthesizing a sequence of Pauli rotations represented by $(\mathcal{S}, \bs b, \bs \theta)$ with a minimal number of Hadamard gates up to a final Clifford operator is equivalent to the problem of constructing a diagonalization network for $\mathcal{S}$ using a minimal number of Hadamard gates.
This approach can easily be extended to take into account the final Clifford operator, as explained in Section~\ref{sec:extension}.
We define $h(C)$ as being the number of Hadamard gates in a Clifford circuit $C$, and we extend the notation for a sequence of Pauli products $\mathcal{S}$ such that $h(\mathcal{S}) = \min\{h(C) \mid \text{$C$ is a diagonalization network for $\mathcal{S}$} \}$.
The problem of synthesizing a sequence of Pauli rotations ignoring the final Clifford operator with a minimal number of Hadamard gates can then be defined as follows.

\begin{problem}[H-Opt]
    Given a sequence $\mathcal{S}$ of Pauli products, find a Clifford circuit $C$ that is a diagonalization network for $\mathcal{S}$ and such that $h(C) = h(\mathcal{S})$.
\end{problem}

In Clifford$+T$ circuits, the Hadamard gadgetization procedure aims to transform the circuit in order to obtain an Hadamard-free subcircuit containing all the $T$ gates.
Hence, a Hadamard gate does not need to be gadgetized if there is no $T$ gate preceding it.
To take this particularity into consideration we define the following problem relating to the synthesis of a sequence of Pauli rotations up to a final Clifford circuit with a minimal number of internal Hadamard gates.

\begin{problem}[Internal-H-Opt]
    Given a sequence $\mathcal{S}$ of Pauli products, find a Clifford circuit $C = C_1 :: C_2$, i.e.\ $C$ is the circuit resulting from the concatenation of $C_1$ and $C_2$, such that $h(C_2)$ is minimized and $C_2$ is a diagonalization network for $\tilde{\mathcal{S}} = U^\dag \mathcal{S} U$ where $U$ is the Clifford operator associated with $C_1$.
\end{problem}

In Section~\ref{sec:diag_alg} we propose a diagonalization network synthesis algorithm to solve the H-Opt problem.
We prove its optimality in Section~\ref{sec:diag_opt}, and it is then employed in Section~\ref{sec:internal} to design an algorithm solving the Internal-H-Opt problem.

\section{Hadamard gates minimization}\label{sec:h_opt}

\subsection{Diagonalization network synthesis algorithm}\label{sec:diag_alg}

We first describe a simple procedure, of fundamental importance in our diagonalization network synthesis algorithm, to construct a Clifford operator $U$ such that $U^\dag P U$ is diagonal, where $P$ is a non-diagonal Pauli product.
Let $i$ be such that $P_i \in \{X, Y\}$, which necessarily exists as $P$ is non-diagonal.
If there exists $j \neq i$ such that $P_j \in \{X, Y\}$, then, based on the operation depicted in Figure~\ref{subfig:cnot}, we can deduce that the Pauli product $P'$ resulting from the conjugation of $P$ by the $\mathrm{CNOT}_{i, j}$ gate satisfies $P'_i \in \{X, Y\}$, $P'_j \in \{I, Z\}$ and $P'_k = P_k$ for all $k \not \in \{i, j\}$.
More generally, if $P' = U^\dag P U$ where $U$ is the Clifford operator associated with the fan-out formed by the gates $\{\mathrm{CNOT}_{i, j} \mid P_j \in \{X, Y\}, \forall j \neq i\}$, then $P'_j$ is diagonal for all $j \neq i$.
To complete the diagonalization of $P'$ we then just have to make $P'_i$ diagonal while preserving this property.
If $P'_i = Y$ then conjugating $P'$ by a $S$ gate on qubit $i$ maps $P'_i$ to $X$.
And in the case where $P'_i = X$, then conjugating $P'$ by a $H$ gate on qubit $i$ maps $P'_i$ to $Z$, and our diagonalization procedure is complete as the $S_i$ and $H_i$ operations do not affect $P'_j$ where $j \neq i$.

Consider the diagonalization network synthesis algorithm whose pseudo-code is given in Algorithm~\ref{alg:diagonalization} and which takes a sequence $\mathcal{S}$ of $m$ Pauli products as input.
The algorithm constructs a Clifford circuit $C$ iteratively by processing the Pauli products constituting $\mathcal{S}$ in order.
When a Pauli product $P = P(\mathcal{S}_{:,i})$ is being processed, if $U^\dag P U$, where $U \in \mathcal{C}_n$ is the Clifford operator implemented by $C$, is diagonal then the algorithm moves on to the next Pauli product.
Otherwise, if $U^\dag P U$ is not diagonal, a sequence of gates, constructed using the procedure described above, are appended to $C$ so that the updated Pauli product $U^\dag P U$ is diagonal.
Thus, Algorithm~\ref{alg:diagonalization} outputs a Clifford circuit that is a diagonalization network for $\mathcal{S}$.
A detailed execution example of Algorithm~\ref{alg:diagonalization} is provided in Appendix~\ref{app:example}.\\

\begin{algorithm}[t]
    \caption{Diagonalization network synthesis with a minimal number of $H$ gates}
    \label{alg:diagonalization}
	\SetAlgoLined
	\SetArgSty{textnormal}
	\SetKwFunction{proc}{DiagonalizationNetworkSynthesis}
	\SetKwInput{KwInput}{Input}
	\SetKwInput{KwOutput}{Output}
    \KwInput{A sequence $\mathcal{S} = \begin{bmatrix}\mathcal{Z}\\ \mathcal{X}\end{bmatrix}$ of $m$ Pauli products.}
    \KwOutput{A diagonalization network for $\mathcal{S}$ with a minimal number of $H$ gates.}
	\SetKwProg{Fn}{procedure}{}{}
    \Fn{\proc{$\mathcal{S}$}}{
		$C \leftarrow$ new empty circuit\\
        \If{$\mathcal{S}$ is empty}{
            \Return C
        }
        \If{$\exists i$ such that $\mathcal{X}_{i, 0} = 1$}{
            \ForEach{$j \in \{j \mid \mathcal{X}_{j,0} = 1\}\setminus \{i\}$}{
                $C \leftarrow C :: \mathrm{CNOT}_{i, j}$\\
                $\mathcal{S} \leftarrow \mathrm{CNOT}_{i, j}\,\mathcal{S}\,\mathrm{CNOT}_{i, j}$\\
            }
            \If{$\mathcal{Z}_{i,0} = 1$}{
                $C \leftarrow C :: S_i$\\
                $\mathcal{S} \leftarrow S_i\,\mathcal{S}\,S_i$\\
            }
            $C \leftarrow C :: H_i$\\
            $\mathcal{S} \leftarrow H_i\,\mathcal{S}\,H_i$\\
        }
        $\mathcal{S} \leftarrow \mathcal{S}$ with its first column removed\\
        \Return $C :: \texttt{DiagonalizationNetworkSynthesis}(\mathcal{S})$
	}
\end{algorithm}

\noindent{\bf Complexity analysis.}
At each iteration the algorithm carries out at most $\mathcal{O}(n)$ row operations on $\mathcal{S}$ where $n$ is the number of qubits, $m$ iterations are performed and $\mathcal{S}$ has $m$ columns, therefore the complexity of Algorithm~\ref{alg:diagonalization} is $\mathcal{O}(nm^2)$.

In the typical case where $n < m$, a faster version of Algorithm~\ref{alg:diagonalization} can be implemented using the tableau representation~\cite{aaronson2004improved}.
Let $\mathcal{T}$ be a tableau initialized at the begining of the algorithm.
Instead of updating $\mathcal{S}$ for each Clifford gate appended to the circuit $C$, we can use $\mathcal{T}$ to keep track of the Clifford operator $U$ implemented by $C$.
For each Clifford gate appended to $C$, $\mathcal{T}$ can be updated in $\mathcal{O}(n)$~\cite{aaronson2004improved}.
Then, the algorithm proceeds in the same way as Algorithm~\ref{alg:diagonalization} by sequentially diagonalizing the Pauli products represented by $\mathcal{S}$.
However, the $i$th Pauli product to diagonalized is not $P(\mathcal{S}_{:,i})$ but $U^\dag P(\mathcal{S}_{:,i}) U$, which can be computed in $\mathcal{O}(n^2)$ using the tableau $\mathcal{T}$.
This operation must be performed $\mathcal{O}(m)$ times and $\mathcal{T}$ must be updated $\mathcal{O}(nm)$ times as the number of gates in the final Clifford circuit is $\mathcal{O}(nm)$, therefore the overall time complexity of this algorithm is $\mathcal{O}(n^2m)$.
More details on this approach are given in Section~\ref{sec:improving}, where this algorithm is adapated to take a Clifford$+R_Z$ circuit as input instead of a sequence of Pauli products.\\

\noindent{\bf Hadamard gate count.}
In order to evaluate $h(C)$, where $C$ is the output circuit of Algorithm~\ref{alg:diagonalization}, we will rely on the following definition.

\begin{definition}[Commutativity matrix]
    Let $\mathcal{S}$ be a sequence of $m$ Pauli products. 
    The commutativity matrix $A^{(\mathcal{S})}$ associated with $\mathcal{S}$ is a strictly upper triangular Boolean matrix of size $m \times m$ such that for all $i < j$:
    $$
    A_{i, j}^{(\mathcal{S})}= 
    \begin{cases}
        0 &\text{ if $\mathcal{S}_{:,i}$ commutes with $\mathcal{S}_{:,j}$},\\ 
        1 &\text{ if $\mathcal{S}_{:,i}$ anticommutes with $\mathcal{S}_{:,j}$}.
    \end{cases}
    $$
\end{definition}

For convenience we will drop the superscript $(\mathcal{S})$ from $A$ when it is clear from the context that $A$ is associated with $\mathcal{S}$.
The commutativity matrix $A^{(\mathcal{S})}$ can also be seen as the adjacency matrix of a directed acyclic graph, which has already been studied and linked to the $T$-depth optimization problem~\cite{zhang2019optimizing}.
In this work, we further reinforce the interest in this structure by establishing a relation between the H-Opt and Internal-H-Opt problems and the rank of $A^{(\mathcal{S})}$.
Note that if $\tilde{\mathcal{S}} = U^\dag \mathcal{S} U$, where $U$ is some Clifford operator, then $A^{(\mathcal{\tilde{S}})} = A^{(\mathcal{S})}$ because if two Pauli products $P$ and $P'$ are commuting (or anticommuting) then $U^\dag P U$ and $U^\dag P' U$ are commuting (or anticommuting).

The number of Hadamard gates in the circuit produced by Algorithm~\ref{alg:diagonalization} can be characterized via the following theorem.

\begin{theorem}\label{thm:alg_h}
    Let $\mathcal{S} = \begin{bmatrix}\mathcal{Z} \\ \mathcal{X}\end{bmatrix}$ be a sequence of $m$ Pauli products, $A$ be its commutativity matrix and $C$ be the Clifford circuit returned by Algorithm~\ref{alg:diagonalization} when $\mathcal{S}$ is given as input.
    Then $h(C) = \rank(M)$ where $M = \begin{bmatrix} \mathcal{X} \\ A \end{bmatrix}$.
\end{theorem}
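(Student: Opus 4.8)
The plan is to turn the statement into a rank bookkeeping along the recursion of Algorithm~\ref{alg:diagonalization}. First note that the algorithm spends exactly one $H$ gate on a column precisely when that column is non-diagonal at the moment it is processed, and none otherwise; hence $h(C)$ is the number of indices $k$ for which the $k$-th column, after conjugation by the Clifford operator $U_k$ assembled from the treatment of columns $0,\ldots,k-1$, is non-diagonal. For $0\le k\le m$ I would let $\mathcal{X}^{(k)}$ be the $\mathcal{X}$-block of the surviving columns $k,\ldots,m-1$ as they appear at the start of step $k$ (so $\mathcal{X}^{(0)}=\mathcal{X}$) and let $A^{(k)}$ be the commutativity matrix of those columns. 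Because Clifford conjugation preserves commutation, $A^{(k)}$ is just $A$ with its first $k$ rows and columns removed; in particular $A^{(k)}_{k,\ell}=A_{k,\ell}$ is frame-independent. Writing $M^{(k)}=\begin{bmatrix}\mathcal{X}^{(k)}\\A^{(k)}\end{bmatrix}$, so that $M^{(0)}=M$ and $M^{(m)}$ is empty, the theorem follows once I show that $\rank(M^{(k)})-\rank(M^{(k+1)})$ equals $1$ when step $k$ uses an $H$ gate and $0$ otherwise: telescoping then gives $h(C)=\rank(M)$.

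The heart of the argument is the $H$-step. When column $k$ is non-diagonal, its column in $M^{(k)}$ is $\begin{bmatrix}\mathcal{X}^{(k)}_{:,k}\\0\end{bmatrix}$ with $\mathcal{X}^{(k)}_{:,k}\neq 0$ (the commutation entries in that column vanish, lying on or below the diagonal of the strictly upper-triangular $A^{(k)}$). I would mirror the algorithm by rank-preserving operations on $M^{(k)}$: the fan-out CNOTs act on the top block as the invertible row operations ``add row $i_k$ to rows $j$'' (Figure~\ref{fig:clifford_operations}), after which the $k$-th top column becomes the basis vector $e_{i_k}$; adding this column to every later column having a $1$ in top-row $i_k$ then clears that row outside column $k$, leaving an isolated pivot at $(i_k,k)$. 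Deleting that pivot row and column lowers the rank by exactly $1$ and produces a matrix $\tilde M$, and it remains to identify $\tilde M$ with $M^{(k+1)}$. The two matrices already agree on the top rows $j\neq i_k$ (untouched by $H_{i_k}$) and on the block $A^{(k+1)}$; they can differ only in the single row where $\tilde M$ inherits $A_{k,\cdot}$ whereas $M^{(k+1)}$ carries the new $X$-row created at qubit $i_k$ by $H_{i_k}$.

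The main obstacle is exactly that $H_{i_k}$ couples the $\mathcal{X}$- and $\mathcal{Z}$-blocks, so this new row is \emph{not} a function of the $\mathcal{X}$-data alone; this is where the invariance of commutation does the work. Writing $\hat x_\ell,\hat z_\ell$ for the $X$- and $Z$-parts of column $\ell$ just before $H_{i_k}$ (after the fan-out and the optional $S_{i_k}$, which forces $\hat z_k[i_k]=0$ and leaves every $X$-part unchanged), the preserved (anti)commutation of columns $k$ and $\ell$ reads $A_{k,\ell}=\langle\hat z_k,\hat x_\ell\rangle+\hat z_\ell[i_k]$, since just before the swap column $k$ has $X$-part $e_{i_k}$ and $Z$-part $\hat z_k$. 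As $H_{i_k}$ sets the new $X$-entry at $i_k$ of column $\ell$ to $\hat z_\ell[i_k]$, the new row of $M^{(k+1)}$ equals $A_{k,\cdot}+\hat z_k^{\top}\hat X$, i.e.\ it differs from the $A_{k,\cdot}$ row of $\tilde M$ only by a fixed combination (coefficients $\hat z_k$, supported away from $i_k$) of the top rows common to both. That is an elementary row operation, so $\rank(\tilde M)=\rank(M^{(k+1)})$ and the $H$-step drops the rank by precisely $1$.

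The free step is the degenerate case of the same computation: column $k$ is already diagonal, so its column in $M^{(k)}$ is zero and may be discarded, and the row $A_{k,\cdot}=\big(\langle\zeta_k,\mathcal{X}^{(k)}_{:,\ell}\rangle\big)_\ell$ (with $\zeta_k$ the $Z$-part of the diagonal column $k$) is a combination of the top rows, so removing it preserves the rank; hence $\rank(M^{(k)})=\rank(M^{(k+1)})$ with no $H$ spent. Using only the elementary facts that zero columns and rows lying in the span of the remaining rows are rank-neutral while an isolated pivot contributes $1$, the telescoping sum $\sum_{k}\big(\rank(M^{(k)})-\rank(M^{(k+1)})\big)=\rank(M)$ counts exactly the steps that spend an $H$ gate, which yields $h(C)=\rank(M)$.
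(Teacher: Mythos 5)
Your proposal is correct and follows essentially the same route as the paper: a telescoping rank argument on $M^{(k)}=\begin{bmatrix}\mathcal{X}^{(k)}\\A^{(k)}\end{bmatrix}$ along the recursion, with the key identity (that $A_{k,\cdot}$, respectively the post-$H$ $X$-row at the pivot qubit, equals a $\mathcal{Z}$-supported combination of the other $X$-rows, by preservation of commutation) matching the paper's Equations~\ref{eq:diag_sum}--(10). The only difference is cosmetic: you realize the rank drop via an explicit pivot elimination with column operations, where the paper argues that the pivot row goes from independent to dependent after the Hadamard.
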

\begin{proof}
Let $\mathcal{S}^{(i)} = \begin{bmatrix}\mathcal{Z}^{(i)} \\ \mathcal{X}^{(i)}\end{bmatrix}$ be the sequence of Pauli products given as input to the $i$th recursive call of Algorithm~\ref{alg:diagonalization} with $\mathcal{S}^{(0)} = \mathcal{S}$, and let $M^{(i)} = \begin{bmatrix} \mathcal{X}^{(i)} \\ A^{(i)} \end{bmatrix}$ where $A^{(i)}$ is the commutativity matrix associated with $\mathcal{S}^{(i)}$.
We first start by analyzing how $M^{(i)}$ evolves to $M^{(i+1)}$ when $\mathcal{S}_{:,0}^{(i)}$ is diagonal.
In such case, we can obtain $M^{(i+1)}$ from $M^{(i)}$ by removing the first column of $M^{(i)}$ and the first row of its submatrix $A^{(i)}$.
Let $P = P(\mathcal{S}_{:,0}^{(i)})$, then, because $P$ is diagonal, the following equation holds:
\begin{equation}\label{eq:diag_sum}
    \bigoplus_{k \in K} \mathcal{X}_{k}^{(i)} = \bigoplus_{k \in K} M_{k}^{(i)} = A_0^{(i)}
\end{equation}
where $K = \{k \mid \mathcal{Z}_{k,0}^{(i)} = 1\}$.
Indeed, as $P$ is diagonal we necessarily have $P_k = Z$ for some $k$, and in the case where $P_j = I$ for all $j \neq k$ we have $\mathcal{X}_{k,j}^{(i)} = 1$ if and only if $\mathcal{S}_{:,0}^{(i)}$ anticommutes with $\mathcal{S}_{:,j}^{(i)}$, and so $\mathcal{X}_k^{(i)} = A^{(i)}_0$.
In a more general case, if there exists $j \neq k$ satisfying $P_j = Z$ then we can apply a $\mathrm{CNOT}_{j, k}$ gate for all such $j$ in order to fall back on our previous case, which implies Equation~\ref{eq:diag_sum}.
Consequently, removing the first row of the submatrix $A^{(i)}$ will not change the rank of $M^{(i)}$.
Moreover, due to the fact that $\mathcal{S}_{:,0}^{(i)}$ is diagonal, the first column of $M^{(i)}$ is equal to the null vector.
Therefore we have $\rank(M^{(i + 1)}) = \rank(M^{(i)})$ when $\mathcal{S}_{:,0}^{(i)}$ is diagonal.

In the case where $\mathcal{S}_{:,0}^{(i)}$ is not diagonal, Algorithm~\ref{alg:diagonalization} will apply a sequence of CNOT and $S$ gates followed by a single $H$ gate.
Let $\tilde{\mathcal{S}}^{(i)} = \begin{bmatrix}\tilde{\mathcal{Z}}^{(i)} \\ \tilde{\mathcal{X}}^{(i)}\end{bmatrix}$ be the sequence of Pauli products obtained by conjugating all Pauli products of $\mathcal{S}^{(i)}$ by this sequence of CNOT and $S$ gates, and let $\tilde{M}^{(i)} = \begin{bmatrix} \tilde{\mathcal{X}}^{(i)} \\ A^{(i)} \end{bmatrix}$.
Note that we have $\rank(\tilde{M}^{(i)}) = \rank(M^{(i)})$ as applying a $S$ or CNOT operation on $\mathcal{S}^{(i)}$ does not change the rank of $\mathcal{X}^{(i)}$.
Let $j$ be the qubit on which the Hadamard gate is applied, we must have $\tilde{M}_{j,0}^{(i)} = 1$ and $\tilde{M}_{k,0}^{(i)} = 0$ for all $k \neq j$, which implies that $\tilde{M}_j^{(i)}$ is independent from all the other rows of $\tilde{M}^{(i)}$.
Let $\hat{M}^{(i)} = \begin{bmatrix} \hat{\mathcal{X}}^{(i)} \\ A^{(i)} \end{bmatrix}$ where $\hat{\mathcal{S}}^{(i)} = \begin{bmatrix}\hat{\mathcal{Z}}^{(i)} \\ \hat{\mathcal{X}}^{(i)}\end{bmatrix}$ is obtained by conjugating all Pauli products of $\tilde{\mathcal{S}}^{(i)}$ by a Hadamard gate on qubit $j$, and notice that $\hat{M}_k^{(i)} = \tilde{M}_k^{(i)}$ for all $k \neq j$.
Analogously to Equation~\ref{eq:diag_sum}, since $\hat{\mathcal{S}}^{(i)}_{:,0}$ is diagonal, the following equation holds:
\begin{equation}
\bigoplus_{k \in \hat{K}} \hat{\mathcal{X}}_k^{(i)} = \bigoplus_{k \in \hat{K}} \hat{M}_{k}^{(i)} = A_0^{(i)}
\end{equation}
where $\hat{K} = \{k \mid \hat{\mathcal{Z}}_{k,0}^{(i)} = 1\}$.
Furthermore, as $j \in \hat{K}$, $\hat{M}_j^{(i)}$ can be expressed as follows:
\begin{equation}
\hat{M}_{j}^{(i)} = \bigoplus_{k \in \hat{K} \setminus \{j\}} \hat{M}_{k}^{(i)} \oplus A_0^{(i)} = \bigoplus_{k \in \tilde{K}} \tilde{M}_{k}^{(i)} \oplus A_0^{(i)}
\end{equation}
where $\tilde{K} = \{k \mid \tilde{\mathcal{Z}}_{k,0}^{(i)} = 1\} = \hat{K} \setminus \{j\}$.
It follows that $\hat{M}_j^{(i)}$ is a linear combination of the rows of $\tilde{M}^{(i)}$ whereas $\tilde{M}_j^{(i)}$ is an independent row, and so $\rank(\hat{M}^{(i)}) = \rank(\tilde{M}^{(i)}) - 1$.
After the Hadamard gate has been applied we end up in the same case as when $\mathcal{S}_{:,0}^{(i)}$ is diagonal, therefore we have $\rank(M^{(i+1)}) = \rank(\hat{M}^{(i)}) = \rank(M^{(i)}) - 1$.

We demonstrated that $\rank(M^{(i+1)}) = \rank(M^{(i)})$ when no Hadamard gate is applied at the $i$th recursive call, and that $\rank(M^{(i+1)}) = \rank(M^{(i)}) - 1$ if one Hadamard gate is applied.
Thus, the number of Hadamard gates in the Clifford circuit $C$ is equal to $\rank(M) - \rank(M^{(m)})$ where $m$ is the number of Pauli products in $\mathcal{S}$.
The sequence of Pauli products $\mathcal{S}^{(m)}$ is empty, hence $\rank(M^{(m)}) = 0$ and $h(C) = \rank(M)$.\\
\end{proof}

\subsection{Optimality}\label{sec:diag_opt}
In this section we demonstrate the optimality of Algorithm~\ref{alg:diagonalization} by proving the following theorem.

\begin{theorem}\label{thm:opt_h}
    Let $\mathcal{S} = \begin{bmatrix}\mathcal{Z} \\ \mathcal{X}\end{bmatrix}$ be a sequence of $m$ Pauli products, $A$ be its commutativity matrix and $C$ be a Clifford circuit that optimally solves the H-Opt problem for $\mathcal{S}$.
    Then $h{(C)} = \rank(M)$ where $M = \begin{bmatrix} \mathcal{X} \\ A \end{bmatrix}$.
\end{theorem}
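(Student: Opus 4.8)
The plan is to combine Theorem~\ref{thm:alg_h} with a matching lower bound. Since Algorithm~\ref{alg:diagonalization} produces a diagonalization network $C_{\mathrm{alg}}$ for $\mathcal{S}$ with $h(C_{\mathrm{alg}}) = \rank(M)$, we immediately obtain $h(\mathcal{S}) \le \rank(M)$. It therefore suffices to prove the reverse inequality $h(\mathcal{S}) \ge \rank(M)$, i.e.\ that \emph{every} diagonalization network for $\mathcal{S}$ uses at least $\rank(M)$ Hadamard gates; the claim for a circuit $C$ optimally solving H-Opt then follows from $h(C) = h(\mathcal{S}) = \rank(M)$.

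To prove the lower bound, I would fix an arbitrary diagonalization network $C$ with witnessing integers $\alpha_0 \le \cdots \le \alpha_{m-1}$ and track a potential $\Phi$ as the gates of $C$ are applied one at a time to $\mathcal{S}$. Maintaining a working set $W$ of columns not yet diagonalized and removed (initially all $m$ columns), I define $\Phi = \rank\!\begin{bmatrix}\mathcal{X}' \\ A'\end{bmatrix}$, where $\mathcal{X}'$ is the current $\mathcal{X}$-block restricted to the columns of $W$ and $A'$ is the commutativity matrix $A$ restricted to the rows and columns indexed by $W$. Initially $\Phi = \rank(M)$. The heart of the argument is to establish three properties mirroring the computations in the proof of Theorem~\ref{thm:alg_h}: (i) a CNOT or $S$ gate leaves $\Phi$ unchanged, since $A$ is invariant under Clifford conjugation and the induced action on $\mathcal{X}'$ is either a no-op ($S$, which only touches $\mathcal{Z}$) or an elementary row operation (CNOT adds one $\mathcal{X}'$-row to another); (ii) whenever the smallest-indexed product $i \in W$ becomes diagonal---which by the ordering $\alpha_0 \le \cdots \le \alpha_{m-1}$ happens at time $\alpha_i$, all products of smaller index having already been removed---deleting its column together with the corresponding row of $A'$ leaves $\Phi$ unchanged; and (iii) each Hadamard gate decreases $\Phi$ by at most $1$.

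Property (iii) is the easy half: a Hadamard on qubit $q$ replaces row $q$ of the $\mathcal{X}'$-block by $\mathcal{Z}'_q$ and alters no other entry of the tracked matrix, and replacing a single row can lower the rank by at most one. Property (ii) is where I expect the main work, and it is a restricted-matrix version of Equation~\ref{eq:diag_sum}. Because $i$ is the minimal surviving index, product $i$ commutes or anticommutes with each $j \in W$ exactly according to $A'_{i,j}$, and diagonality of product $i$ yields $\bigoplus_{k \in K_i}\mathcal{X}'_k = A'_i$ with $K_i = \{k \mid \mathcal{Z}'_{k,i}=1\}$; thus row $i$ of $A'$ is a linear combination of the $\mathcal{X}'$-rows, while column $i$ of the tracked matrix is the zero vector (its $\mathcal{X}'$-part vanishes by diagonality and its $A'$-part vanishes since no surviving index precedes $i$). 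Deleting this zero column and the dependent row therefore preserves the rank. The essential subtlety---and the reason for removing columns in increasing index order---is that only for the minimal surviving index does the strictly-upper-triangular row $A'_i$ coincide with the full commutativity row of product $i$, which is what makes the linear relation available.

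Putting these together, $\Phi$ starts at $\rank(M)$, is unaffected by CNOT and $S$ gates and by the removal steps, and reaches $0$ once $W$ is empty (every product is diagonalized and removed at its time $\alpha_i$). If $C$ contains $d$ Hadamard gates, then the total change of $\Phi$ is at least $-d$ by property (iii), while it equals $-\rank(M)$; hence $d \ge \rank(M)$. As this holds for every diagonalization network, $h(\mathcal{S}) \ge \rank(M)$, which combined with the upper bound gives $h(C) = h(\mathcal{S}) = \rank(M)$ and completes the theorem.
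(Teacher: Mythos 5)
Your proof is correct, and it takes a genuinely different route from the paper's. For the lower bound the paper does not argue on $C$ directly: it first proves that simultaneously diagonalizing a set of mutually commuting Pauli products costs at least $\rank(\mathcal{X})$ Hadamard gates (Proposition~\ref{prop:co-diag}), then converts an arbitrary diagonalization network $C$ into a circuit $C'$ on $n+h(C)$ qubits --- deleting the $S$ gates and inserting a SWAP onto a fresh ancilla after each Hadamard --- which simultaneously diagonalizes the commuting family $\begin{bmatrix}\bs 0\\ M'\end{bmatrix}$, and finally shows $\rank(M')\ge\rank(M)$ (Proposition~\ref{prop:rank}). You instead run a single rank monovariant $\Phi$ along the gates of $C$ itself, using the restricted form of Equation~\ref{eq:diag_sum} to justify that removing the zero column $i$ together with the dependent row $A'_i$ preserves the rank; your properties (i)--(iii) are essentially the computations the paper performs inside the proof of Theorem~\ref{thm:alg_h}, redeployed as a lower bound on an arbitrary network rather than an exact count for the algorithm's output. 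The key subtlety --- that the strictly upper-triangular row $A'_i$ only coincides with the true anticommutation pattern against the surviving columns when $i$ is the minimal surviving index, which the ordering $\alpha_0\le\cdots\le\alpha_{m-1}$ guarantees --- is correctly identified and handled. Your route is more self-contained and avoids the ancilla/SWAP construction entirely; the paper's detour buys Proposition~\ref{prop:co-diag}, which it reuses for the Clifford-synthesis extension in Section~\ref{sec:extension} and for the Internal-H-Opt analysis. One cosmetic omission: the gate set also contains $X$, which acts trivially on the sign-free encoding and hence on $\Phi$, so it folds into your case (i).
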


Our proof of Theorem~\ref{thm:opt_h} rests on the following proposition, which puts an upper bound on the number of Hadamard gates required to simultaneously diagonalize a set of mutually commuting Pauli products.

\begin{proposition}\label{prop:co-diag}
    Let $\mathcal{S} = \begin{bmatrix}\mathcal{Z} \\ \mathcal{X}\end{bmatrix}$ be a sequence of $m$ mutually commuting Pauli products of size $n$ and let $U \in \mathcal{C}_n$ be a Clifford operator such that $U^\dag P(\mathcal{S}_{:,i}) U$ is diagonal for all $i$.
    Then $h(C) \geq \rank(\mathcal{X})$, where $C$ is a Clifford circuit implementing $U$.
\end{proposition}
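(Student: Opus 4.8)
The plan is to track the $X$-part $\mathcal{X}$ of the sequence as the gates of $C$ are applied one at a time, and to show that only Hadamard gates can change $\rank(\mathcal{X})$, each by at most one. Write $C = g_1 :: \cdots :: g_L$ as a product of generators from $\{X, \mathrm{CNOT}, S, H\}$, and let $\mathcal{S}^{(t)} = \begin{bmatrix}\mathcal{Z}^{(t)}\\\mathcal{X}^{(t)}\end{bmatrix}$ be the sequence obtained by conjugating $\mathcal{S}$ by the first $t$ gates, so that $\mathcal{S}^{(0)} = \mathcal{S}$ and $\mathcal{S}^{(L)} = U^\dag \mathcal{S} U$. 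Setting $r_t = \rank(\mathcal{X}^{(t)})$, the goal is to bound $r_0$ by the number of Hadamard gates.

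First I would read off the effect of each generator on the block $\mathcal{X}$ from the update rules of Figure~\ref{fig:clifford_operations}. A $\mathrm{CNOT}_{i,j}$ adds row $i$ of $\mathcal{X}$ to row $j$, an elementary row operation preserving $\rank(\mathcal{X})$; an $S_i$ gate only adds $\mathcal{X}_i$ into $\mathcal{Z}_i$ and hence leaves $\mathcal{X}$ untouched; and conjugation by $X_i$ affects only signs, which are discarded in the Pauli-product encoding, so it too leaves $\mathcal{X}$ unchanged. Thus $r_t = r_{t-1}$ whenever $g_t \in \{X, \mathrm{CNOT}, S\}$. The Hadamard is the only generator acting nontrivially: $H_i$ swaps rows $i$ of $\mathcal{Z}^{(t-1)}$ and $\mathcal{X}^{(t-1)}$, so $\mathcal{X}^{(t)}$ is obtained from $\mathcal{X}^{(t-1)}$ by replacing a single row and leaving the others fixed.

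The key elementary fact I would invoke here is that replacing one row of a matrix changes its rank by at most one: deleting the $i$-th row produces a matrix $\mathcal{X}'$ that is a common submatrix of both $\mathcal{X}^{(t-1)}$ and $\mathcal{X}^{(t)}$, and each of these has rank equal to $\rank(\mathcal{X}')$ or $\rank(\mathcal{X}') + 1$, whence $|r_t - r_{t-1}| \le 1$. Finally I would assemble a telescoping estimate. Since $U^\dag P(\mathcal{S}_{:,i}) U$ is diagonal for every $i$, the final $X$-part vanishes, i.e.\ $\mathcal{X}^{(L)} = 0$ and $r_L = 0$, while $r_0 = \rank(\mathcal{X})$. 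By the triangle inequality $\rank(\mathcal{X}) = |r_0 - r_L| \le \sum_{t=1}^{L} |r_t - r_{t-1}|$, and every summand is zero except those from Hadamard gates, each at most one; the right-hand side is therefore at most $h(C)$, giving $h(C) \ge \rank(\mathcal{X})$.

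The argument is structurally simple, so the only point demanding care is the exhaustive gate-by-gate verification that no generator other than $H$ can alter $\rank(\mathcal{X})$, together with the single-row-replacement rank bound; these are the steps where a gap could hide. It is worth remarking that mutual commutativity of the Pauli products is not used directly in the inequality: it is precisely the condition guaranteeing that a simultaneously diagonalizing $U$ exists (diagonal Paulis pairwise commute, so anticommuting products admit no common diagonalizer), which makes the hypothesis non-vacuous. Consequently the same proof yields the lower bound for any $\mathcal{S}$ admitting such a $U$.
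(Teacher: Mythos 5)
Your proof is correct and follows essentially the same route as the paper's: track $\rank(\mathcal{X}^{(t)})$ gate by gate, observe that only a Hadamard can change it (by at most one, since it replaces a single row), and conclude by telescoping from $\rank(\mathcal{X})$ down to $0$. Your version merely adds a few welcome elaborations the paper leaves implicit (the explicit treatment of the $X$ gate, the common-submatrix justification of the single-row rank bound, and the remark that commutativity only serves to make the hypothesis non-vacuous).
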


\begin{proof}
Let $\mathcal{S}^{(i)}$ be the state of $\mathcal{S}$ resulting from conjugating all its Pauli products by the Clifford operator implemented by the first $i$ gates of $C$.
If the $(i + 1)$th gate of $C$ is a CNOT or $S$ gate, then $\rank(\mathcal{X}^{(i + 1)}) = \rank(\mathcal{X}^{(i)})$.
Else, if the $(i + 1)$th gate of $C$ is a Hadamard gate, then $\mathcal{X}^{(i+1)}$ and $\mathcal{X}^{(i)}$ have at least $n-1$ rows in common and $1 \geq |\rank(\mathcal{X}^{(i)}) - \rank(\mathcal{X}^{(i+1)})| \geq 0$.
Therefore, the number of Hadamard gates in $C$ is at least $|\rank(\mathcal{X}) - \rank(\mathcal{X}^{(k)})|$, where $k$ is the number of gates in $C$.
The circuit $C$ performs a simultaneous diagonalization of the Pauli products constituting $\mathcal{S}$, which implies that $\rank(\mathcal{X}^{(k)}) = 0$, hence $h(C) \geq |\rank(\mathcal{X}) - \rank(\mathcal{X}^{(k)})| = \rank(\mathcal{X})$.\\
\end{proof}

In the following we use $\mathcal{S}_{:,:j}$ to denote the submatrix formed by the first $j$ columns of $\mathcal{S}$.
Theorem~\ref{thm:alg_h} implies an upper bound on the number of Hadamard gates required to solve the H-Opt problem.
There always exists a Clifford circuit $C$ that is a diagonalization network for a sequence of Pauli products $\mathcal{S} = \begin{bmatrix}\mathcal{Z} \\ \mathcal{X}\end{bmatrix}$ such that $\rank(M) \geq h(C)$ where $M = \begin{bmatrix} \mathcal{X} \\ A \end{bmatrix}$ and $A$ is the commutativity matrix associated with $\mathcal{S}$.
In order to prove Theorem~\ref{thm:opt_h} it remains to show that if $C$ is a diagonalization network for $\mathcal{S}$ then $h(C) \geq \rank(M)$.
To do so, we will show that we can derive a Clifford circuit $C'$ from $C$ such that $C'$ is satisfying $h(C') = h(C)$ and is a solution to a specific instance $\mathcal{S}'$ of the simultaneous diagonalization problem, where $\mathcal{S}' = \begin{bmatrix} \bs 0 \\ M' \end{bmatrix}$ is a sequence of mutually commuting Pauli products satisfying $\rank(M') \geq \rank(M)$.
By Proposition~\ref{prop:co-diag} we then would have $h(C) = h(C') \geq \rank(M') \geq \rank(M)$.
We first give a construction for $M'$ and prove that $\rank(M') \geq \rank(M)$ via the following proposition.

\begin{proposition}\label{prop:rank}
    Let $C$ be a diagonalization network for a sequence $\mathcal{S} = \begin{bmatrix}\mathcal{Z} \\ \mathcal{X}\end{bmatrix}$ of $m$ Pauli products of size $n$, and let $C^{(i)}$ be the subcircuit of $C$ truncated after its $i$th Hadamard gate.
    And let the matrices $\mathcal{S}^{(i)} = \begin{bmatrix}\mathcal{Z}^{(i)} \\ \mathcal{X}^{(i)}\end{bmatrix}$ be such that $\mathcal{S}^{(0)} = \mathcal{S}$ and in the case where $i>0$:
    \begin{align*} 
        \mathcal{S}^{(i)}_{:,j} &= \bs 0 && \text{if $C^{(i-1)}$ is a diagonalization network for $\mathcal{S}_{:,:j}$},\\
        P(\mathcal{S}^{(i)}_{:,j}) &= \pm U_{(i)}^\dag P(\mathcal{S}_{:,j}) U_{(i)} && \text{otherwise},
    \end{align*}
    where $U_{(i)} \in \mathcal{C}_n$ is the Clifford operator associated with $C^{(i)}$.
    Consider the matrices $M = \begin{bmatrix} \mathcal{X} \\ A \end{bmatrix}$ and $M' = \begin{bmatrix} \mathcal{X} \\ A' \end{bmatrix}$ where $A$ is the commutativity matrix associated with $\mathcal{S}$, and $A'$ is a matrix composed of $h(C)$ rows such that $A'_{i-1} = \mathcal{X}^{(i)}_j$ where $j$ is the qubit on which the $i$th Hadamard gate of $C$ is applied.
    Then we have $\rank(M') \geq \rank(M)$.
\end{proposition}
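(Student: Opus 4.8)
The plan is to prove the stronger statement that the row space of $M$ is contained in the row space of $M'$, from which $\rank(M) \leq \rank(M')$ is immediate. Since $M$ and $M'$ share the same top block $\mathcal{X}$, their row spaces are $\text{rowspace}(\mathcal{X}) + \text{rowspace}(A)$ and $\text{rowspace}(\mathcal{X}) + \text{rowspace}(A')$ respectively, so it suffices to show that every row $A_k$ of the commutativity matrix lies in $\text{rowspace}(\mathcal{X}) + \text{rowspace}(A')$. I would split this into a \emph{diagonalization identity} expressing $A_k$ through the $\mathcal{X}$-rows of a conjugated frame, and a \emph{tracking lemma} relating those frames back to $\mathcal{X}$ and the rows of $A'$.

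For the identity, I would fix $k$ and pick the index $i$ at which column $k$ first becomes diagonal in the conjugated frame while every column preceding it has already been set to $\bs 0$ in $\mathcal{S}^{(i)}$. Such an $i$ exists because the non-decreasing integers $\alpha_0 \leq \cdots \leq \alpha_{m-1}$ attached to the diagonalization network force the columns to be diagonalized in order. In that frame $P(\mathcal{S}^{(i)}_{:,k})$ is diagonal, so expanding the (frame-invariant) commutation relations of column $k$ with the others, exactly as in Equation~\ref{eq:diag_sum}, gives $A_k = \bigoplus_{q\,:\,\mathcal{Z}^{(i)}_{q,k}=1}\mathcal{X}^{(i)}_q$, a member of $\text{rowspace}(\mathcal{X}^{(i)})$. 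The purpose of having zeroed the earlier columns is precisely to annihilate the entries of this combination in positions $\leq k$, so that the vector produced is the strictly upper triangular row $A_k$ rather than the full symmetric commutation row.

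It then remains to prove, by induction on the number of Hadamard gates, that $\text{rowspace}(\mathcal{X}^{(i)}) \subseteq \text{rowspace}(\mathcal{X}) + \langle A'_0,\ldots,A'_{i-1}\rangle$. The base case $\mathcal{X}^{(0)}=\mathcal{X}$ is trivial. For the step I would follow the passage from $\mathcal{S}^{(i-1)}$ to $\mathcal{S}^{(i)}$: the $\mathrm{CNOT}$ and $S$ gates occurring before the $i$th Hadamard act as row operations on the $\mathcal{X}$ block and hence preserve its row space, while the $i$th Hadamard on qubit $j_i$ replaces the $\mathcal{X}$-row $j_i$ by the corresponding $\mathcal{Z}$-row. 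After the freshly diagonalized columns are zeroed, this new row is by definition $A'_{i-1} = \mathcal{X}^{(i)}_{j_i}$, so it is adjoined to the span explicitly, whereas all rows $q\neq j_i$ are inherited unchanged and remain in $\text{rowspace}(\mathcal{X}^{(i-1)})$.

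The main obstacle is showing that the re-zeroing step introduces no uncontrolled rows, i.e.\ that it touches only row $j_i$. The key observation is that a column which becomes newly zeroed at step $i$ was diagonal immediately before the $i$th Hadamard: it was diagonalized by the circuit truncated at the previous Hadamard, and only $\mathrm{CNOT}/S$ gates, which preserve diagonality, intervene afterwards. Consequently, after the $i$th Hadamard on qubit $j_i$ its $\mathcal{X}$-support is confined to row $j_i$, so setting these columns to $\bs 0$ modifies no row other than $j_i$, exactly the one being recorded as $A'_{i-1}$. This is the step where the careful definition of $\mathcal{S}^{(i)}$ is indispensable, and I expect it to require the most bookkeeping. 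Combining the induction with the diagonalization identity yields $A_k \in \text{rowspace}(\mathcal{X}^{(i)}) \subseteq \text{rowspace}(\mathcal{X}) + \text{rowspace}(A')$ for every $k$, hence $\text{rowspace}(M)\subseteq\text{rowspace}(M')$ and $\rank(M')\geq\rank(M)$.
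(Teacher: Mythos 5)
Your proposal follows essentially the same route as the paper: your ``diagonalization identity'' is the paper's expression of $A_j$ as $\bigoplus_{k\in K}\mathcal{X}^{(\alpha_j)}_k$ with $K=\{k \mid \mathcal{Z}^{(\alpha_j)}_{k,j}=1\}$, and your ``tracking lemma'' is the paper's induction showing that the rows of $\mathcal{X}^{(i)}$ lie in the span of the first $n+i$ rows of $M'$, with the zeroing step handled by the same observation that a freshly diagonalized column can only affect the Hadamard qubit's row. One minor imprecision: when $\alpha_{k-1}=\alpha_k$ the frame you describe (column $k$ diagonal while every earlier column is already zeroed) need not exist, but the identity still holds at frame $\alpha_k$ because earlier columns that are merely diagonal also have vanishing $\mathcal{X}$-entries, so the combination is still supported only on positions $\ell>k$.
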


\begin{proof}
We will prove Proposition~\ref{prop:rank} by showing that the rows of $M$ are in the span of the set formed by the rows of the $\mathcal{X}^{(i)}$ matrices, which are themselves in the row space of the matrix $M'$.
We first show that the rows of $\mathcal{X}^{(i)}$ are in the span of the set formed by the first $n + i$ rows of $M'$.
For $i = 0$, this assertion is obviously true as we have $\mathcal{X}_j = M'_j$ for all $0 \le j < n$.
Let $\tilde{\mathcal{S}}^{(i)} = \begin{bmatrix}\tilde{\mathcal{Z}}^{(i)} \\ \tilde{\mathcal{X}}^{(i)}\end{bmatrix} = U^\dag \mathcal{S}^{(i)} U$, where $U$ is the Clifford operator implemented by the $\{\mathrm{CNOT}, S\}$ subcircuit comprised between the $i$th and $(i+1)$th Hadamard gate of $C$.
Note that performing a $\mathrm{CNOT}$ or $S$ operation on $\mathcal{S}^{(i)}$ doesn't change the row space of $\mathcal{X}^{(i)}$, therefore the rows of $\tilde{\mathcal{X}}^{(i)}$ are in the row space of $\mathcal{X}^{(i)}$.
Let $\bs \alpha$ be a vector of size $m$ such that $\alpha_j$ is the smallest non-negative integer for which $C^{(\alpha_j)}$ is a diagonalization network for $\mathcal{S}_{:,:j}$ and let $k$ be the qubit on which the $(i+1)$th Hadamard gate of $C$ is applied.
We can then distinguish 3 cases for the value of $\alpha_j$, where $0 \leq j < m$:
\begin{itemize}
    \item $\alpha_j < i $, then $\mathcal{S}^{(i+1)}_{:,j} = \tilde{\mathcal{S}}^{(i)}_{:,j} = \bs 0$ for all $j$, because if $C^{(i-1)}$ is a diagonalization network for $\mathcal{S}_{:,:j}$ then $C^{(i)}$ is also a diagonalization network for $\mathcal{S}_{:,:j}$;
    \item $\alpha_j = i$, then $\mathcal{S}^{(i + 1)}_{:,j} = \bs 0$ and $\tilde{\mathcal{S}}^{(i)}_{:,j}$ is diagonal, therefore $\mathcal{X}^{(i+1)}_{:,j} = \tilde{\mathcal{X}}^{(i)}_{:,j} = \bs 0$;
    \item $\alpha_j > i$ then $\tilde{\mathcal{S}}^{(i)}_{:,j}$ encodes the same Pauli product as $\mathcal{S}^{(i+1)}_{:,j}$ up to a Hadamard operation on qubit $k$, therefore $\mathcal{X}^{(i+1)}_{\ell,j} = \tilde{\mathcal{X}}^{(i)}_{\ell,j}$ for all $\ell \neq k$.
\end{itemize}
To sum up we have $\mathcal{X}^{(i + 1)}_j = \tilde{\mathcal{X}}^{(i)}_j$ for all $j \neq k$ and by definition we have $\mathcal{X}^{(i + 1)}_k = M'_{n+i+1}$.
It follows that the rows of $\mathcal{X}^{(i + 1)}$ are in the span of the set formed by the first $n + i + 1$ rows of $M'$.
We now show that, for all $j$, $A_j$ is in the row space of $\mathcal{X}^{(\alpha_j)}$.
Since $\mathcal{S}^{(\alpha_j)}_{:,j}$ is diagonal, similarly to Equation~\ref{eq:diag_sum}, the following holds:
\begin{equation}
    \bigoplus_{k \in K} \mathcal{X}^{(\alpha_j)}_{k,\ell}= 
    \begin{cases}
    0 & \text{if $\mathcal{S}_{:,\ell}$ commutes with $\mathcal{S}_{:,j}$ or } \ell \leq j,\\
    1 & \text{if $\mathcal{S}_{:,\ell}$ anticommutes with $\mathcal{S}_{:,j}$}, \forall \ell > j,
    \end{cases}
\end{equation}
where $K = \{k \mid \mathcal{Z}_{k, j}^{(\alpha_j)} = 1 \}$.
This sum satisfy the same properties as the row $j$ of the commutativity matrix $A$ associated with $\mathcal{S}$, therefore we have:
\begin{equation}
    A_j= \bigoplus_{k \in K} \mathcal{X}^{(\alpha_j)}_k.
\end{equation}
Thus, for all $j$, $A_j$ is in the row space of the matrix $\mathcal{X}^{(\alpha_j)}$, whose rows are themselves in the row space of $M'$.
Consequently, $M_j$ is in the row space of $M'$ for all $j$ and $\rank(M') \geq \rank(M)$.\\
\end{proof}

\begin{figure}[t]
    \begin{subfigure}[T]{0.5\textwidth}
    \begin{quantikz}[column sep=0.4cm, row sep=0.35cm]
        & \targ{} & \gate{S} & \gate{H} & \ctrl{1} & \gate{H} & \gate{S} & \targ{0} & \qw & \qw \\
        & \ctrl{-1} & \qw & \qw & \targ{} & \qw & \qw & \ctrl{-1} & \gate{H} & \qw \\
    \end{quantikz}\vspace{-0.3cm}
    \caption{Circuit $C$ that is a diagonalization network for $\mathcal{S}$ where $\mathcal{S}_{:,0}$ is diagonalized by the first $3$ gates, $\mathcal{S}_{:,1}$ by the first $5$ gates and $\mathcal{S}_{:,2}$ by the whole circuit.}
    \label{subfig:init_circ}
    \end{subfigure}
    \begin{subfigure}[T]{0.5\textwidth}
    \begin{quantikz}[column sep=0.4cm, row sep=0.35cm]
        & \targ{} & \gate{H} & \swap{2} & \ctrl{1} &\gate{H} & \swap{3} & \targ{0} & \qw & \qw & \qw \\
        & \ctrl{-1} & \qw & \qw & \targ{} & \qw & \qw & \ctrl{-1} & \gate{H} & \swap{3} & \qw \\
        & \qw & \qw & \targX{} & \qw & \qw & \qw & \qw & \qw & \qw & \qw \\
        & \qw & \qw & \qw & \qw & \qw & \targX{} & \qw & \qw & \qw & \qw \\
        & \qw & \qw & \qw & \qw & \qw & \qw & \qw & \qw & \targX{} & \qw\\
    \end{quantikz}\vspace{-0.3cm}
    \caption{Circuit $C'$ derived from $C$.\\~\\}
    \label{subfig:swap_circ}
    \end{subfigure}
    \begin{subfigure}[b]{0.5\textwidth}
    \begin{equation*}
        \arraycolsep=4.0pt
        \mathcal{S} = \begin{bmatrix}\mathcal{Z} \\ \mathcal{X}\end{bmatrix} =
        \left(\begin{array}{ccc}
                1 & 0 & 0 \\
                1 & 1 & 0 \\\hline
                1 & 0 & 1 \\
                0 & 1 & 0 \\
        \end{array}\right),\quad
        A^\mathcal{(S)} = 
        \left(\begin{array}{ccc}
                0 & 1 & 1 \\
                0 & 0 & 0 \\
                0 & 0 & 0 \\
        \end{array}\right)
    \end{equation*}
    \caption{\centering Initial sequence of Pauli products and its commutativity matrix.}
    \label{subfig:init_matrix}
    \end{subfigure}
    \begin{subfigure}[b]{0.5\textwidth}
    \begin{equation*}
        \arraycolsep=4.0pt
        \mathcal{S}' = \begin{bmatrix}\bs 0 \\ M'\end{bmatrix} = 
        \left(\begin{array}{ccc}
                \bs 0 & \bs 0 & \bs 0 \\\hline
                1 & 0 & 1 \\
                0 & 1 & 0 \\
                0 & 1 & 1 \\
                0 & 0 & 1 \\
                0 & 0 & 0 \\
        \end{array}\right)
    \end{equation*}
    \caption{\centering Pauli products that are simultaneously diagonalized by $C'$, where $M'$ is as defined in Proposition~\ref{prop:rank}.}
    \label{subfig:swap_matrix}
    \end{subfigure}
    \caption{Construction example of $C'$ and $\mathcal{S}'$ for the proof of Theorem~\ref{thm:opt_h}.}
    \label{fig:swap_example}
\end{figure}

The proof of Theorem~\ref{thm:opt_h} can now be formulated based on Proposition~\ref{prop:co-diag} and~\ref{prop:rank}.
\begin{proof}[Proof of Theorem~\ref{thm:opt_h}]
Consider a Clifford $C'$ containing the same number of Hadamard gates as $C$, acting over $n + h(C)$ qubits and constructed by the following process:
\begin{enumerate}
    \item Start with $C'$ as a copy of $C$ with $h(C)$ additional qubits.
    \item Remove all $S$ gates from $C'$.
    \item After the $i$th Hadamard gate of $C'$, insert a SWAP gate operating over the qubits $n+i-2$ and $j$ where $j$ is the qubit on which the $i$th Hadamard gate is applied.
\end{enumerate}
An example of this process is provided in Figure~\ref{fig:swap_example}.
The SWAP$_{i, j}$ gate can be implemented using $\mathrm{CNOT}$ gates:
$$\mathrm{SWAP}_{i,j} = \mathrm{CNOT}_{i,j}\mathrm{CNOT}_{j,i}\mathrm{CNOT}_{i,j}$$
This operation can be performed on $\mathcal{S}$ by swapping the rows $\mathcal{Z}_i$ and $\mathcal{Z}_j$ as well as the rows $\mathcal{X}_i$ and $\mathcal{X}_j$.
Let $M'$ be defined as in Proposition~\ref{prop:rank}, let $\mathcal{S}' = \begin{bmatrix} \bs 0 \\ M' \end{bmatrix}$ be a sequence of $m$ mutually commuting Pauli products of size $n + h(C)$ and let $U'$ is the Clifford operator implemented by $C'$, we will show that $U'^\dag P(\mathcal{S}'_{:,i})U'$ is diagonal for all $i$.
We reuse $C^{(i)}$ and $\mathcal{S}^{(i)} = \begin{bmatrix}\mathcal{Z}^{(i)} \\ \mathcal{X}^{(i)}\end{bmatrix}$ as defined in Proposition~\ref{prop:rank}, and we define $\mathcal{S}'^{(i)} = \begin{bmatrix}\mathcal{Z}'^{(i)} \\ \mathcal{X}'^{(i)}\end{bmatrix}$ and $C'^{(i)}$ analogously where $C'^{(i)}$ is the subcircuit resulting from truncating $C'$ after its $i$th inserted SWAP gate and $C'^{(0)}$ is the empty circuit.

We now prove by induction that for all $0 \leq i \leq h(C)$, $0 \leq j < n$ and $n \leq k < n + i$ we have $\mathcal{X}'^{(i)}_j = \mathcal{X}^{(i)}_j$, $\mathcal{Z}'^{(i)}_j = \bs 0$ and $\mathcal{X}'^{(i)}_k = \bs 0$.
For $i = 0$ and $0 \leq j < n$, the equalities $\mathcal{X}'_j = \mathcal{X}_j$ and $\mathcal{Z}'^{(i)}_j = \bs 0$ are satisfied by definition.
Let $0 \leq i < h(C)$ and $\alpha_i$ be the qubit on which the $(i+1)$th Hadamard gate of $C$ is applied.
The matrix $\mathcal{S}^{(i+1)}$ can be obtained from $\mathcal{S}^{(i)}$ by performing a sequence of $\{\mathrm{CNOT}, S\}$ operations and a Hadamard operation on qubit $\alpha_i$.
Similarly, the matrix $\mathcal{S}'^{(i+1)}$ can be obtained from $\mathcal{S}'^{(i)}$ by performing the same sequence of $\mathrm{CNOT}$ operations, a Hadamard operation on qubit $\alpha_i$ and a SWAP operation acting on the qubits $\alpha_i$ and $n+i$.
In both cases, the rows $\mathcal{X}^{(i)}_j$ and $\mathcal{X}'^{(i)}_j$, where $0 \leq j < n, j \neq \alpha_i$, are only affected by the $\mathrm{CNOT}$ operations, and so if $\mathcal{X}'^{(i)}_j = \mathcal{X}^{(i)}_j$ for all $0 \leq j < n$, then $\mathcal{X}'^{(i+1)}_j = \mathcal{X}^{(i+1)}_j$ for all $0 \leq j < n, j \neq \alpha_i$.
Notice that the only gate in $C'$ acting on the qubit $n + i$ is the SWAP gate operating on the qubits $\alpha_i$ and $n + i$ and recall that by definition $\mathcal{X}'_{n+i} = \mathcal{X}^{(i+1)}_{\alpha_i}$; then, because this SWAP gate is the last gate of the circuit $C'^{(i+1)}$, we have $\mathcal{X}'^{(i+1)}_{\alpha_i} = \mathcal{X}'_{n+i} = \mathcal{X}^{(i+1)}_{\alpha_i}$.
Therefore, for all $0 \leq j < n$, if $\mathcal{X}'^{(i)}_j = \mathcal{X}^{(i)}_j$ then $\mathcal{X}'^{(i+1)}_j = \mathcal{X}^{(i+1)}_j$.

If $\mathcal{Z}'^{(i)}_j = \bs 0$ for all $0 \leq j < n$, then applying a sequence of CNOT operations on $\mathcal{S}'^{(i)}$ acting on the first $n$ qubits will not alter the matrix $\mathcal{Z}'^{(i)}$.
Thus, if $\mathcal{Z}'^{(i)}_j = \bs 0$ for all $0 \leq j < n$, then $\mathcal{Z}'^{(i+1)}_j = \mathcal{Z}'^{(i)}_j = \bs 0$ for all $0 \leq j < n, j \neq \alpha_i$.
Furthermore, if $\mathcal{Z}'^{(i)}_{\alpha_i} = \bs 0$ for all $0 \leq j < n$, then applying a Hadamard operation on $\mathcal{S}'^{(i)}$ acting on qubit $\alpha_i$ after this sequence of CNOT operations would yield $\mathcal{X}'^{(i)}_{\alpha_i} = \bs 0$.
This Hadamard operation is followed by a SWAP operation between the qubits $\alpha_i$ and $k = n + i$ which would induce that $\mathcal{X}'^{(i + 1)}_k = \bs 0$ and $\mathcal{Z}'^{(i + 1)}_{\alpha_i} = \bs 0$ because $\mathcal{Z}'_{k} = \bs 0$.
Thus, for all $0 \leq j < n$ , if $\mathcal{Z}'^{(i)}_j = \bs 0$ then  $\mathcal{Z}'^{(i+1)}_j = \bs 0$.
In addition, for all $k$ such that $n \leq k < n + i$, the circuit $C'^{(i+1)}$ doesn't contain any gate operating on the qubit $k$ other than those included in $C'^{(i)}$; therefore if $\mathcal{X}'^{(i)}_k = \bs 0$ for all $n \leq k < n + i$ then $\mathcal{X}'^{(i + 1)}_k = \bs 0$ for all $n \leq k < n + i + 1$.

Let $i = h(C)$, by combining the facts that $\mathcal{X}'^{(i)}_j = \mathcal{X}^{(i)}_j = \bs 0$ for all $0 \leq j < n$ and $\mathcal{X}'^{(i)}_{n+j} = \bs 0$ for all $0 \leq j < i$, we can deduce that $\mathcal{X}'^{(i)}$ is the null matrix which imply that $U'^\dag P(\mathcal{S}'_{:,j})U'$ is diagonal for all $j$ where $U'$ is the Clifford operator implemented by $C'$.
By Proposition~\ref{prop:co-diag} we have $h(C) = h(C') \ge \rank(M')$, and by Proposition~\ref{prop:rank} we have $\rank(M') \ge \rank(M)$ which entails $h(C) \ge \rank(M)$.
This lower bound is satisfied by Algorithm~\ref{alg:diagonalization} as stated by Theorem~\ref{thm:alg_h}, this implies that Algorithm~\ref{alg:diagonalization} is optimal and concludes the proof of Theorem~\ref{thm:opt_h}.\\
\end{proof}

\noindent{\bf Pauli rotations ordering.}
Algorithm~\ref{alg:diagonalization} solves the H-Opt problem for a fixed sequence of Pauli rotations.
However, if two adjacent Pauli rotations are commuting then their order could be inverted, leading to another sequence of Pauli rotations representing the same operator.
We show that changing the ordering in this way doesn't affect the minimal number of Hadamard gate required to implement the diagonalization network associated with the sequence of Pauli rotations.
Let $\mathcal{S} = \begin{bmatrix}\mathcal{Z} \\ \mathcal{X}\end{bmatrix}$ be a sequence of Pauli products, let $i$ be such that $\mathcal{S}_{:,i}$ commutes with $\mathcal{S}_{:,i+1}$ and let $\mathcal{S}' = \begin{bmatrix}\mathcal{Z}' \\ \mathcal{X}'\end{bmatrix}$ be a sequence of Pauli products obtained by swapping the columns $i$ and $i+1$ of $\mathcal{S}$.
Let $M = \begin{bmatrix} \mathcal{X} \\ A \end{bmatrix}$  and $M' = \begin{bmatrix} \mathcal{X}' \\ A' \end{bmatrix}$ where $A$ and $A'$ are the commutativity matrices of $\mathcal{S}$ and $\mathcal{S}'$ respectively.
Since $\mathcal{S}_{:,i}$ commutes with $\mathcal{S}_{:,i+1}$ we have $A_{i, i+1} = A'_{i, i+1} = 0$, and so $M_{:, i} = M'_{:, i+1}$ and $M_{:, i+1} = M'_{:, i}$.
The matrix $M'$ can be obtained from $M$ by swapping its columns $i$ and $i+1$, which entails $\rank(M) = \rank(M')$.
Thus, inverting the order of two adjacent and commuting Pauli rotations doesn't change the minimal number of Hadamard gates required to implement the diagonalization network associated with the sequence of Pauli rotations.\\

\noindent{\bf Other gate sets.}
One could consider the problem over other Clifford gate sets, which raises the question of whether these gate sets could perform better than the $\{X, \text{CNOT}, S, H\}$ gate set considered.
In order to achieve a number of Hadamard gate inferior to $\rank(M)$, where $M$ is defined as in Theorem~\ref{thm:opt_h}, the gate set considered needs to have at least one gate, other than the Hadamard gate, such that its decomposition over the $\{X, \mathrm{CNOT}, S, H\}$ gate set necessarily involves at least one Hadamard gate.
Said otherwise, the number of Hadamard gates is at least $\rank(M)$ for any gate set in which the Hadamard gate is the only gate $U$ for which there exists a non-diagonal Pauli operator $P$ such that $U^\dag PU$ is diagonal.

\subsection{Extension to Clifford$+R_Z$ circuit re-synthesis} \label{sec:extension}
Any Clifford$+R_Z$ circuit can be characterized by a sequence of Pauli rotations followed by a final Clifford operator $C_f$~\cite{gosset2014algorithm}.
We demonstrated that Algorithm~\ref{alg:diagonalization} solves the H-Opt problem optimally, and so it can be used to synthesize a sequence of Pauli rotations up to a final Clifford operator $C_{f'}$ with a minimal number of Hadamard gates.
The synthesis of the full Clifford$+R_Z$ circuit can then be performed by coupling Algorithm~\ref{alg:diagonalization} with a procedure to synthesize the Clifford operator $C_f \cdot C_{f'}$.
We will demonstrate that this procedure can in fact also be performed by Algorithm~\ref{alg:diagonalization} with a minimal number of Hadamard gates.

A Clifford operator $U \in \mathcal{C}_n$ can be represented by a tableau encoding $2n$ Pauli operators such that $n$ of them are mutually commuting Pauli operators called stabilizer generators and the other half are also mutually commuting Pauli operators referred to as destabilizer generators.
If the stabilizer generators are all diagonalized, then the Clifford operator can be synthesized using only $\{X, S ,\mathrm{CNOT}\}$ gates~\cite{aaronson2004improved}.
Thus, synthesizing a Clifford operator with the minimal number of Hadamard gates amounts to finding a Clifford circuit $C$ containing the minimal number of Hadamard gates and such that $U^\dag P U$ is diagonal for all $P$ in the stabilizer generators, where $U$ is the Clifford operator associated with $C$.
We will demonstrate via the following proposition that a Clifford circuit satisfying these properties is produced by Algorithm~\ref{alg:diagonalization} when the sequence of Pauli products $\mathcal{S}$ given as input encodes the stabilizer generators on any order.

\begin{proposition}\label{prop:co-diag_opt}
    Let $\mathcal{S}$ be a sequence of $m$ mutually commuting Pauli products and $C$ be the Clifford circuit returned by Algorithm~\ref{alg:diagonalization} when $\mathcal{S}$ is given as input.
    Then $U^\dag P(\mathcal{S}_{:,j}) U$ is diagonal for all $j$, where $U$ is the Clifford operator associated with $C$.
\end{proposition}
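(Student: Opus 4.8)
The plan is to show that the diagonalization network produced by Algorithm~\ref{alg:diagonalization} simultaneously diagonalizes a set of \emph{mutually commuting} Pauli products, not just each one individually at its own truncation point. The key observation is that mutual commutativity is what distinguishes this proposition from the generic diagonalization network guarantee: when the input Pauli products all commute, diagonalizing them one at a time does not disturb the ones already handled. I would first recall that Algorithm~\ref{alg:diagonalization} processes the columns of $\mathcal{S}$ in order, and that once a column has been diagonalized by the gates appended so far, the algorithm removes it and recurses. The content of the proposition is that, for a mutually commuting input, the Clifford operator $U$ implemented by the \emph{entire} circuit $C$ still diagonalizes \emph{every} $P(\mathcal{S}_{:,j})$, i.e.\ the later gates appended while processing columns $j+1, \ldots, m-1$ do not re-introduce non-diagonal entries into the earlier columns.

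The central step is an invariant maintained across recursive calls: at every stage, all the Pauli products in the current (conjugated) sequence remain mutually commuting, and any column that is currently diagonal stays diagonal under every subsequent gate the algorithm appends. The first half follows because commutativity is preserved under Clifford conjugation (as noted in the excerpt, $A^{(\tilde{\mathcal{S}})} = A^{(\mathcal{S})}$ for $\tilde{\mathcal{S}} = U^\dag \mathcal{S} U$). For the second half I would argue as follows. Suppose column $\ell$ is diagonal and the algorithm is now processing a later column, say the leftmost remaining one, which I denote $P$. The gates appended are the fan-out $\{\mathrm{CNOT}_{i,j} \mid P_j \in \{X,Y\}\}$, possibly an $S_i$, and a single $H_i$, where $P_i \in \{X,Y\}$. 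The $\mathrm{CNOT}$ and $S$ gates preserve diagonality of any column on which they act (they never turn a $\{I,Z\}$ entry into an $X$ or $Y$). The only dangerous gate is $H_i$, which would map $Z$ on qubit $i$ to $X$. So I must show that the diagonal column $\ell$ has the identity on qubit $i$, i.e.\ $\ell$ is unaffected by $H_i$. This is where mutual commutativity enters: since $P_i \in \{X,Y\}$ and every remaining $H$ is applied only on a qubit $i$ where the current column being processed has an $X$ or $Y$, the commutation relation between the diagonal column $\ell$ and the column $P$ being processed constrains the overlap of their supports. Concretely, because the already-diagonalized column $\ell$ commutes with $P$, and $\ell$ is diagonal (so $(\ell)_i \in \{I, Z\}$ while $P_i \in \{X, Y\}$ anticommute at qubit $i$), the commutation must be restored by an even number of other anticommuting positions; but the algorithm has cleared all $X/Y$ entries of $P$ except at qubit $i$ before applying $H_i$, forcing $(\ell)_i = I$ at the moment $H_i$ is applied.

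Making that last sentence precise is the main obstacle, and I would handle it carefully by tracking the state \emph{at the moment just before $H_i$ is applied}. At that point the fan-out has already made $P$ supported (in its $X/Y$ part) only on qubit $i$; write $P' = U'^\dag P U'$ for the current conjugate, so $P'_j \in \{I, Z\}$ for all $j \neq i$ and $P'_i \in \{X, Y\}$. Let $\ell' = U'^\dag (\text{column }\ell) U'$, which is still diagonal by the invariant and still commutes with $P'$ by preservation of commutativity. Commutativity of $P'$ and $\ell'$ requires an even number of anticommuting positions; at every $j \neq i$ both are in $\{I,Z\}$ and hence commute, so the position $i$ alone must contribute evenly, i.e.\ $P'_i$ and $\ell'_i$ must commute. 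Since $P'_i \in \{X,Y\}$, this forces $\ell'_i = I$, so $H_i$ leaves $\ell'$ diagonal. This establishes the invariant, and then a straightforward induction on the recursive calls shows that the column diagonalized at call $j$ remains diagonal through the end of the circuit, giving $U^\dag P(\mathcal{S}_{:,j}) U$ diagonal for all $j$ with $U$ the full Clifford operator of $C$.
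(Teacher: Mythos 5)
Your proposal is correct and follows essentially the same route as the paper's proof: both reduce the question to the single Hadamard gate applied per non-diagonal column, and both use the observation that a diagonal Pauli product commuting with a product whose only non-diagonal entry is at qubit $i$ must have the identity at qubit $i$, so that $H_i$ cannot disturb already-diagonalized columns. Your write-up is somewhat more explicit about the invariant and the harmlessness of the $\mathrm{CNOT}$ and $S$ gates, but the key step is identical.
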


\begin{proof}
Let $P$ and $P'$ be commuting Pauli operators such that $P$ is diagonal and $P'$ is not diagonal.
If there exists $k$ such that $P'_k = X$ and $P'_\ell \in \{I, Z\}$ for all $\ell \neq k$, then $P_k = I$ because $P$ commutes with $P'$ and $P$ is diagonal.
Therefore conjugating $P$ and $P'$ with a Hadamard gate on qubit $k$ will result in both operators being diagonalized.
Let $C^{(i)}$ be the subcircuit of $C$ truncated before its $i$th Hadamard gate with $C^{(0)}$ defined as the empty circuit, and let $U_{(i)}$ be the Clifford operator associated with $C^{(i)}$.
Due to the construction process of $C$, for each subcircuit $C^{(i)}$ where $i>0$ there exists $j$ such that $P' = U_{(i)}^\dag \mathcal{S}_{:,j} U_{(i)}$ satisfies $P'_k = X$ and $P'_\ell \in \{I, Z\}$ for all $\ell \neq k$ where $k$ is the qubit on which the $i$th Hadamard gate of $C$ is applied.
Hence, for all $i < h(C)$ and for all $j$, if $U_{(i)}^\dag P(\mathcal{S}_{:,j}) U_{(i)}$ is diagonal, then $U_{(i+1)}^\dag  P(\mathcal{S}_{:,j}) U_{(i+1)}$ is also diagonal.
The circuit $C$ is a diagonalization network for $\mathcal{S}$, which imply that for all $j$ there exists $U_{(i)}$ such that $U_{(i)}^\dag P(\mathcal{S}_{:,j}) U_{(i)}$ is diagonal, and so $U^\dag P(\mathcal{S}_{:,j}) U$ is a diagonal for all $j$.\\
\end{proof}

Based on Proposition~\ref{prop:co-diag_opt}, we can now show that Algorithm~\ref{alg:diagonalization} can be used to synthesize a sequence of Pauli rotations followed by a final Clifford operator with a minimal number of Hadamard gates.
Let $\mathcal{S}$ be a sequence of Pauli products associated with the sequence of Pauli rotations we are aiming to implement, let $\mathcal{S}'$ be a sequence of Pauli products encoding the stabilizer generators of the final Clifford operator, and let $\tilde{\mathcal{S}} = \begin{bmatrix} \mathcal{S} & \mathcal{S}' \end{bmatrix}$.
Any $\{X, \mathrm{CNOT}, S, H, R_Z\}$ circuit implementing this sequence of Pauli rotations followed by the final Clifford operator is necessarily a diagonalization network for $\tilde{\mathcal{S}}$.
The circuit $C$ returned by Algorithm~\ref{alg:diagonalization} when $\tilde{\mathcal{S}}$ is given as input satisfies this condition with a minimal number of Hadamard gates.
Moreover, as indicated by Proposition~\ref{prop:co-diag_opt}, $C$ simultaneously diagonalize the sequence of Pauli products encoded by $\mathcal{S}'$.
Thus, the synthesis of the sequence of Pauli rotations and the final Clifford operator can be completed with a minimal number of Hadamard gate by inserting $\{X, \mathrm{CNOT}, S, R_Z\}$ subcircuits into $C$.

Let $C$ be the circuit obtained once the number of Hadamard gates have been optimized with our method.
The circuit $C$ may contain an important number of CNOT gates as our algorithm does not aim at optimizing the CNOT-count.
If necessary, several methods can be used to optimize the number of CNOT gates in $C$ while preserving the number of Hadamard gates.
First, the Clifford parts of $C$ can be re-synthesized by using a Clifford circuit synthesis algorithm that preserve the optimal number of Hadamard gates.
For example, as shown in Reference~\cite{maslov2018shorter}, a Clifford circuit can be implemented with the optimal number of Hadamard gates via two $\{\mathrm{CNOT}, \mathrm{CZ}, S\}$ circuits separated by a layer of Hadamard gates.
Algorithms designed for the synthesis of $\{\mathrm{CNOT}, \mathrm{CZ}, S\}$ circuits can then be used to optimize the number of gates or the depth of the circuit~\cite{de2022graph}.
A complementary way of optimizing the number of CNOT gates in $C$ is to re-synthesize the Hadamard-free subcircuits of $C$ via a phase polynomial synthesis algorithm~\cite{amy2018controlled, vandaele2022phase}.
This method would probably be more effective than the re-synthesis of the Clifford parts approach when $C$ contains large Hadamard-free subcircuits.

\section{Internal Hadamard gates minimization}\label{sec:internal}
In this section, we tackle the problem of minimizing the number of internal Hadamard gates, which corresponds to the number of Hadamard gates occurring between the first and the last non-Clifford $R_Z$ gate of the circuit.
We first give an algorithm in Section~\ref{sec:algorithm_internal} that performs the synthesis of a diagonalization network while minimizing the number of internal Hadamard gates.
We then prove its optimality in Section~\ref{sec:h_internal_opt}.

\subsection{Algorithm}\label{sec:algorithm_internal}
Solving the Internal-H-Opt problem for a sequence $\mathcal{S} = \begin{bmatrix}\mathcal{Z} \\ \mathcal{X}\end{bmatrix}$ of Pauli products consists in finding a Clifford operator $U$ such that $\rank(\tilde{M})$ is minimal where $\tilde{M} = \begin{bmatrix} \tilde{\mathcal{X}} \\ A \end{bmatrix}$, $\tilde{\mathcal{S}} = \begin{bmatrix}\tilde{\mathcal{Z}} \\ \tilde{\mathcal{X}}\end{bmatrix} = U^\dag\mathcal{S}U$ and $A$ is the commutativity matrix associated with $\mathcal{S}$.
The inequality $\rank(A) \leq \rank(\tilde{M}) \leq \rank(M) \leq \rank(A) + n$, where $M = \begin{bmatrix} \mathcal{X} \\ A \end{bmatrix}$ and $n$ is the number of qubits, imply that the circuit produced by Algorithm~\ref{alg:diagonalization} contains at most $n$ additional internal Hadamard gates when compared to an optimal solution.
To go beyond this approximation and obtain an optimal solution, it is necessary to find a sequence of Clifford operations which, when applied to $\mathcal{S}$, transform $\mathcal{X}$ into $\tilde{\mathcal{X}}$.
As discussed in Section~\ref{sec:extension}, implementing a Clifford operator can be done in two parts: finding a circuit that simultaneously diagonalize the stabilizer generators of the Clifford operator and finishing the implementation with a $\{X, S, \mathrm{CNOT}\}$ circuit.
The $\{X, S, \mathrm{CNOT}\}$ circuit can be disregarded as the associated operations have no impact on the rank of $\tilde{M}$.
Hence, solving the Internal-H-Opt problem for a sequence $\mathcal{S}$ of Pauli products consists in finding a set of mutually commuting Pauli products, encoded in a matrix $\mathcal{S}'$, that are simultaneously diagonalized by a Clifford operator $U$ and such that $\rank(\tilde{M})$ is minimal where $\tilde{M} = \begin{bmatrix} \tilde{\mathcal{X}} \\ A \end{bmatrix}$ and $\tilde{\mathcal{S}}$ is the sequence of Pauli products resulting from conjugating all the Pauli products of $\mathcal{S}$ by $U$.
As stated by Proposition~\ref{prop:co-diag_opt}, a circuit that simultaneously diagonalize the Pauli products of $\mathcal{S}'$ is produced by Algorithm~\ref{alg:diagonalization} when $\mathcal{S}'$ is given as input.
Thus, if $\begin{bmatrix} \mathcal{S}' & \mathcal{S} \end{bmatrix}$ is given as input to Algorithm~\ref{alg:diagonalization}, then the constructed circuit is a diagonalization network for $\mathcal{S}$ which containing a minimal number of internal Hadamard gates.
An example of the execution of Algorithm~\ref{alg:internal} is given in Figure~\ref{fig:internal_h_example}.

We propose an algorithm, whose pseudo-code is given in Algorithm~\ref{alg:internal}, to solve the Internal-H-Opt problem optimally by finding the Pauli products constituting $\mathcal{S}'$.
Let $J_m$ be an exchange matrix of size $m \times m$ defined as follows:
\begin{equation}
    J_{m_{i,j}} =
    \begin{cases}
    1 & \text{if $i + j = m - 1$},\\
    0 & \text{otherwise}.
    \end{cases}
\end{equation}
As such, the Pauli products encoded by the columns of the matrix $\mathcal{S}J_m$ are then the same as the ones encoded by $\mathcal{S}$ but in reverse order.
The algorithm starts by performing a call to Algorithm~\ref{alg:diagonalization} to obtain a Clifford circuit $C$ that is a diagonalization network for the sequence of Pauli products encoded in $\mathcal{S}J_m$.
Then, a set of stabilizer generators associated with the inverse of $C$ are encoded in the columns of $\mathcal{S}'$ and a second and final call to Algorithm~\ref{alg:diagonalization} is performed where $\begin{bmatrix} \mathcal{S}' & \mathcal{S} \end{bmatrix}$ is given as input.
We prove that the resulting circuit gives an optimal solution to the Internal-H-Opt problem in the next subsection.
When one uses Algorithm~\ref{alg:internal} to perform the re-synthesis of a circuit, as explained in Section~\ref{sec:extension}, the stabilizer generators associated with the final Clifford operator of the input circuit can be append to the final call to Algorithm~\ref{alg:diagonalization} to obtain a full re-synthesis of the circuit containing both a minimal number of Hadamard gates and internal Hadamard gates.

Note that a set of stabilizer generators associated with the inverse of the Clifford circuit $C$ can be computed in $\mathcal{O}(n^2m)$ using the tableau representation as $C$ is composed of $\mathcal{O}(nm)$ gates and a tableau can be updated in $\mathcal{O}(n)$ operations when a Clifford gate is applied.
The complexity of the algorithm then resides in the two calls made to Algorithm~\ref{alg:diagonalization}.
The first call has a complexity of $\mathcal{O}(n^2m)$ as $\mathcal{S}J_m$ is composed of $m$ Pauli products.
For the second call, $n+m$ Pauli products are given as input because a Clifford operator acting on $n$ qubits has $n$ stabilizer generators.
This induces a complexity of $\mathcal{O}(n^2(n+m)) = \mathcal{O}(n^3 + n^2m)$, which corresponds to $\mathcal{O}(n^2m)$ in the typical case where $n \leq m$.
Thus, the overall complexity of Algorithm~\ref{alg:internal} matches the complexity of Algorithm~\ref{alg:diagonalization}.

\begin{algorithm}[t]
    \caption{Diagonalization network synthesis with a minimal number of internal $H$ gates}
    \label{alg:internal}
	\SetAlgoLined
	\SetArgSty{textnormal}
	\SetKwFunction{proc}{InternalHadamardMinimization}
	\SetKwInput{KwInput}{Input}
	\SetKwInput{KwOutput}{Output}
	\KwInput{A sequence $\mathcal{S}$ of $m$ Pauli products of size $n$.}
	\KwOutput{A diagonalization network for $\mathcal{S}$ with a minimal number of internal $H$ gates.}
    $J_m \leftarrow$ exchange matrix of size $m\times m$\\
    $C \leftarrow \texttt{DiagonalizationNetworkSynthesis}(\mathcal{S}J_m)$\\
    $\mathcal{S}' \leftarrow$ stabilizer generators of the inverse of the Clifford operator associated with $C$\\
    \Return $\texttt{DiagonalizationNetworkSynthesis}(\begin{bmatrix} \mathcal{S}' & \mathcal{S} \end{bmatrix} )$
\end{algorithm}

\begin{figure}[t]
    \begin{subfigure}[T]{0.5\textwidth}
    \begin{center}
    \begin{quantikz}[column sep=0.4cm, row sep=0.35cm]
        & \gate{H} & \qw & \ctrl{1} & \gate{S} & \gate{H} & \qw \\
        & \gate{S} & \gate{H} & \targ{} & \qw & \qw & \qw \\
    \end{quantikz}\vspace{-0.5cm}
    \end{center}
    \caption{\centering Circuit produced by Algorithm~\ref{alg:diagonalization} when $\mathcal{S}J_m$ is given as input.}
    \label{subfig:first_circ}
    \end{subfigure}
    \begin{subfigure}[T]{0.5\textwidth}
    \begin{center}
    \begin{quantikz}[column sep=0.4cm, row sep=0.35cm]
        & \gate{S} & \gate{H} & \ctrl{1} & \gate{H} & \ctrl{1} & \gate{S} & \gate{H} & \qw \\
        & \qw & \qw & \targ{} & \qw & \targ{} & \qw & \qw & \qw \\
    \end{quantikz}\vspace{-0.5cm}
    \end{center}
    \caption{\centering Circuit produced by Algorithm~\ref{alg:diagonalization} when $\begin{bmatrix} \mathcal{S}' & \mathcal{S} \end{bmatrix}$ is given as input. The first 4 gates are a diagonalization network for $\mathcal{S}'$, the last 3 gates are a diagonalization for $\tilde{\mathcal{S}}$. The whole circuit solves the Internal-H-Opt problem for $\mathcal{S}$ with $\rank(A^{(\mathcal{S})}) = 1$ internal Hadamard gates.}
    \label{subfig:final_circ}
    \end{subfigure}
    \begin{subfigure}[b]{0.5\textwidth}
    \begin{equation*}
        \arraycolsep=4.0pt
        \mathcal{S} = \begin{bmatrix}\mathcal{Z} \\ \mathcal{X}\end{bmatrix} =
        \left(\begin{array}{ccc}
                1 & 0 & 0 \\
                1 & 1 & 0 \\\hline
                1 & 0 & 1 \\
                0 & 1 & 0 \\
        \end{array}\right),\quad
        A^\mathcal{(S)} = 
        \left(\begin{array}{ccc}
                0 & 1 & 1 \\
                0 & 0 & 0 \\
                0 & 0 & 0 \\
        \end{array}\right)
    \end{equation*}
    \caption{\centering Initial sequence of Pauli products and its commutativity matrix, which are the same as the example in Figure~\ref{fig:swap_example}.}
    \end{subfigure}
    \begin{subfigure}[b]{0.5\textwidth}
    \begin{equation*}
        \arraycolsep=4.0pt
        \mathcal{S}' = 
        \left(\begin{array}{cc}
                1 & 0 \\
                1 & 1 \\\hline
                1 & 1 \\
                0 & 1 \\
        \end{array}\right),\quad
        \tilde{\mathcal{S}} = 
        \left(\begin{array}{ccc}
                0 & 0 & 1 \\
                1 & 1 & 0 \\\hline
                0 & 1 & 1 \\
                0 & 1 & 1 \\
        \end{array}\right)
    \end{equation*}
    \caption{\centering Sequence of Pauli products $\mathcal{S}'$ and $\tilde{\mathcal{S}}$ such that $\mathcal{S}'$ is associated with the stabilizer generators of the Clifford operator $U^\dag$ where $U$ is implemented by the circuit depicted in Subfigure~\ref{subfig:first_circ}, and $\tilde{\mathcal{S}} = U^\dag\mathcal{S}U$.}
    \end{subfigure}
    \caption{Example of an execution of Algorithm~\ref{alg:internal}.
        For a sequence of Pauli products $\mathcal{S}$~(c), the first call to Algorithm~\ref{alg:diagonalization} will produce a circuit~(a) with associated Pauli products $\mathcal{S}'$~(d).
    The algorithm will then output the circuit produced by Algorithm~\ref{alg:diagonalization} when $\begin{bmatrix} \mathcal{S}' & \mathcal{S} \end{bmatrix}$ is given as input~(b).}
    \label{fig:internal_h_example}
\end{figure}

\subsection{Optimality} \label{sec:h_internal_opt}
This subsection is dedicated to the proof of the following theorem, which states the optimality of Algorithm~\ref{alg:internal}.

\begin{theorem}\label{thm:internal_h}
    Let $\mathcal{S}$ be a sequence of $m$ Pauli products, $A$ be its commutativity matrix and let $C$ be the Clifford circuit returned by Algorithm~\ref{alg:internal} when $\mathcal{S}$ is given as input.
    Then $C$ optimally solves the Internal-H-Opt problem with $\rank(A)$ internal Hadamard gates.
\end{theorem}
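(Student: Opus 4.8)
The claim combines a lower bound---every diagonalization network for $\mathcal{S}$ uses at least $\rank(A)$ internal Hadamard gates---with a matching upper bound showing that Algorithm~\ref{alg:internal} attains it. The plan is to dispose of the lower bound directly and then reduce the upper bound to a single structural lemma about the first call. For the lower bound, note that for any Clifford prefix $C_1$ with operator $U$, the minimum number of Hadamard gates in a diagonalization network $C_2$ for $\tilde{\mathcal{S}} = U^\dag \mathcal{S} U$ is $\rank(\tilde M)$ with $\tilde M = \begin{bmatrix}\tilde{\mathcal{X}} \\ A\end{bmatrix}$, by Theorem~\ref{thm:opt_h} and the invariance $A^{(\tilde{\mathcal{S}})} = A$. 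Since $A$ is a submatrix of $\tilde M$, we get $\rank(\tilde M) \ge \rank(A)$ for every $U$, so the optimal value of Internal-H-Opt is at least $\rank(A)$.

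For the upper bound I would first pin down the operator produced by the algorithm. Let $V$ be the Clifford operator of the first call, a diagonalization network for the reversed sequence $\mathcal{S}J_m$. Because $\mathcal{S}'$ is chosen as the stabilizer generators of $V^{-1}$, Proposition~\ref{prop:co-diag_opt} together with the Clifford-synthesis discussion of Section~\ref{sec:extension} shows that the prefix of the second call (which diagonalizes $\mathcal{S}'$) implements $V$ up to an $\{X, S, \mathrm{CNOT}\}$ correction, and such a correction does not change $\rank(\tilde M)$. Hence $\tilde P_j := P(\tilde{\mathcal{S}}_{:,j}) = V^\dag P(\mathcal{S}_{:,j}) V$, and by Theorem~\ref{thm:alg_h} applied to $\tilde{\mathcal{S}}$ the number of internal Hadamard gates equals $h(C_2) = \rank(\tilde M)$. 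Everything therefore reduces to proving $\rank(\tilde M) = \rank(A)$, i.e.\ that the row space of $\tilde{\mathcal{X}}$ is contained in the row space of $A$.

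Passing to orthogonal complements, this containment is equivalent to the implication: every $v \in \mathbb{F}_2^m$ with $Av = \bs 0$ also satisfies $\tilde{\mathcal{X}} v = \bs 0$. Writing $Q_v = \bigoplus_{j : v_j = 1} P(\mathcal{S}_{:,j})$ and using that Clifford conjugation is linear on symplectic vectors, the condition $\tilde{\mathcal{X}} v = \bs 0$ says exactly that $V^\dag Q_v V$ is diagonal, while $Av = \bs 0$ says that $P(\mathcal{S}_{:,i})$ commutes with $\bigoplus_{j > i,\, v_j = 1} P(\mathcal{S}_{:,j})$ for every $i$. So the whole theorem rests on the lemma: if $P(\mathcal{S}_{:,i})$ commutes with $\bigoplus_{j>i,\, v_j=1} P(\mathcal{S}_{:,j})$ for all $i$, then $V^\dag Q_v V$ is diagonal.

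I would prove this lemma by downward induction on $i$, exploiting that the truncations of $V$ diagonalize the $P(\mathcal{S}_{:,j})$ in the order $j = m-1, \dots, 0$. Letting $V^{(\beta_i)}$ be the truncation that first diagonalizes $P(\mathcal{S}_{:,i})$ and $R_i = \bigoplus_{j \ge i,\, v_j = 1} P(\mathcal{S}_{:,j})$, the invariant is that $(V^{(\beta_i)})^\dag R_i V^{(\beta_i)}$ is diagonal; the case $i = 0$ gives $V^\dag Q_v V$ diagonal. The step writes $R_i = R_{i+1} \oplus v_i P(\mathcal{S}_{:,i})$: the second term is diagonalized by construction, and the first is the inductive hypothesis carried through the gates $G_i$ that diagonalize $P(\mathcal{S}_{:,i})$. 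The hard part---and the only place diagonality can be lost---is the single Hadamard of $G_i$, applied on the qubit $q$ where the transformed $P(\mathcal{S}_{:,i})$ becomes a lone $X$. I expect the crux to be showing that the already-diagonal image of $R_{i+1}$ carries no $Z$ on qubit $q$: this is precisely where $Av = \bs 0$ enters, since the commutativity of $P(\mathcal{S}_{:,i})$ with $R_{i+1}$ is preserved under conjugation, and the only site at which the transformed $P(\mathcal{S}_{:,i})$ (a lone $X$ on $q$) could anticommute with the diagonal image of $R_{i+1}$ is a $Z$ on $q$; commutativity forbids it, so the Hadamard preserves diagonality and closes the induction. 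Combining the lemma with the lower bound then yields $h(C_2) = \rank(A)$, establishing optimality.
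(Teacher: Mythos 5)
Your proposal is correct and follows essentially the same route as the paper: the lower bound via $A$ being a submatrix of $\tilde{M}$ together with Theorem~\ref{thm:opt_h}, and the upper bound by showing the operator from the first (reversed) call annihilates $\mathrm{nullspace}(A)$ in the $X$-part, which is exactly Proposition~\ref{prop:internal_h_opt}. Your key lemma and its downward induction on the partial products $R_i$, with the Hadamard step saved by commutativity forbidding a $Z$ on the Hadamard qubit, is the paper's matrix-level induction (Equations~\ref{eq:k_sum}--\ref{eq:2}) rephrased in operator language.
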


We first show that the optimal number of internal Hadamard gates is equal to $\rank(A)$.
Our proof rests on the following proposition.

\begin{proposition}\label{prop:commutativity}
    Let $\mathcal{S}$ be a sequence of $m$ Pauli products, $A$ be its commutativity matrix and let $\bs y, \bs y'$ be such that $A\bs y = \bs 0$ and $A\bs y' = \bs 0$.
    Then the Pauli products encoded by $\mathcal{S}\bs y$ and $\mathcal{S}\bs y'$ are commuting.
\end{proposition}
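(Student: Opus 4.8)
The plan is to translate the commutativity condition into a symplectic bilinear form over $\mathbb{F}_2$ and then read off the conclusion from the two null-space hypotheses. First I would recall the characterization of commutativity given in the preliminaries: two Pauli products encoded by $u = \begin{bmatrix} u_Z \\ u_X \end{bmatrix}$ and $v = \begin{bmatrix} v_Z \\ v_X \end{bmatrix}$ in $\mathbb{F}_2^{2n}$ anticommute at position $k$ exactly when $u_{Z,k}v_{X,k} + u_{X,k}v_{Z,k} = 1$, so they commute if and only if
$$\omega(u, v) := u_Z^\top v_X + u_X^\top v_Z = 0 \pmod 2,$$
this quantity counting the anticommuting positions modulo $2$. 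The map $\omega$ is bilinear and symmetric over $\mathbb{F}_2$, and satisfies $\omega(u,u) = 0$ for every $u$ since each Pauli product commutes with itself.

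The next step is to express $\omega$ in terms of $A$. Setting $B_{i,j} = \omega(\mathcal{S}_{:,i}, \mathcal{S}_{:,j})$, the matrix $B$ is symmetric with a zero diagonal, and by the very definition of the commutativity matrix its strictly upper triangular entries coincide with those of $A$; hence $B = A + A^\top$. Because $\mathcal{S}\bs y = \bigoplus_{i\,:\,y_i = 1} \mathcal{S}_{:,i}$ encodes a Pauli product (the phase-free product of the selected columns), and likewise for $\mathcal{S}\bs y'$, bilinearity of $\omega$ yields
$$\omega(\mathcal{S}\bs y, \mathcal{S}\bs y') = \sum_{i,j} y_i\, y'_j\, \omega(\mathcal{S}_{:,i}, \mathcal{S}_{:,j}) = \bs y^\top B \bs y' = \bs y^\top A \bs y' + \bs y^\top A^\top \bs y'.$$
The first term vanishes because $A\bs y' = \bs 0$, and the second equals $(A\bs y)^\top \bs y'$, which vanishes because $A\bs y = \bs 0$. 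Thus $\omega(\mathcal{S}\bs y, \mathcal{S}\bs y') = 0$, and by the criterion above the Pauli products encoded by $\mathcal{S}\bs y$ and $\mathcal{S}\bs y'$ commute.

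There is no serious obstacle here; the argument is essentially a one-line symplectic computation once the setup is in place. The single point that warrants care is the identity $B = A + A^\top$: it hinges on the diagonal of $B$ being zero (every Pauli product commutes with itself, so $\omega(u,u)=0$ over $\mathbb{F}_2$) and on $A$ recording precisely the strictly-upper-triangular anticommutation data. With that identity established, the symmetry of $\omega$ is exactly what lets both cross terms be annihilated by the two hypotheses $A\bs y = A\bs y' = \bs 0$.
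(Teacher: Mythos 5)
Your proof is correct and follows essentially the same route as the paper: both reduce the claim to showing $\bs y^\top(A \oplus A^\top)\bs y' = 0$ and then split this into $\bs y^\top A \bs y' \oplus (A\bs y)^\top \bs y'$, each term killed by one of the two null-space hypotheses. You merely make explicit the symplectic bilinear form and the identity $B = A + A^\top$ that the paper leaves implicit.
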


\begin{proof}
Notice that the Pauli product $\mathcal{S}_{:,i}$ commutes with $\mathcal{S}_{:,j}$ if and only if $(A \oplus A^T)_{i, j} = (A \oplus A^T)_{j, i} = 0$.
Then $\mathcal{S}\bs y'$ commutes with $\mathcal{S}_{:,i}$ if and only if $v_i = 0$, where $\bs v = (A \oplus A^T)\bs y'$.
And $\mathcal{S} \bs y'$ commutes with $\mathcal{S} \bs y$ if and only if $\bs y^T \bs v = \bs y^T (A \oplus A^T)\bs y' = 0$.
As $A\bs y = \bs 0$ and $A\bs y' = \bs 0$, we can show that $\bs y^T(A \oplus A^T)\bs y' = \bs y^T A\bs y' \oplus \bs y^T A^T\bs y' = \bs y^T A\bs y' \oplus (A \bs y)^T\bs y' = 0$, which implies that $\mathcal{S}\bs y$ commutes with $\mathcal{S}\bs y'$.\\
\end{proof}

Based on Proposition~\ref{prop:commutativity} we can show that the optimal number of internal Hadamard gates is equal to $\rank(A)$.
Let $\mathcal{S}'$ be a sequence of Pauli products such that the columns of $\mathcal{S}'$ are forming a spanning set of $\{\mathcal{S}\bs y \mid A\bs y = \bs 0, \bs y \in \mathbb{F}_2^m\}$.
It follows that for all $\bs y$ satisfying $A\bs y = \bs 0$ there exists a vector $\bs y'$ such that $\mathcal{S}\bs y = \mathcal{S}' \bs y'$.
Moreover, Proposition~\ref{prop:commutativity} entails that all the Pauli products of $\mathcal{S}'$ are mutually commuting.
Therefore if the Pauli products encoded in $\mathcal{S}'$ were all to be diagonal, then, for all $\bs y$ satisfying $A\bs y = \bs 0$, the Pauli product $\mathcal{S}\bs y$ would be diagonal, i.e. $\mathcal{X}\bs y = \bs 0$.
Let $C'$ be the circuit resulting from the execution of Algorithm~\ref{alg:diagonalization} when $\mathcal{S}'$ is given as input and let $\tilde{\mathcal{S}}$ be the sequence of Pauli products where, for all $i$, the Pauli product encoded by $\tilde{\mathcal{S}}_{:,i}$ is equal to the Pauli product encoded by $\mathcal{S}_{:,i}$ conjugated by the Clifford operator associated with $C'$.
Let $\tilde{M} = \begin{bmatrix} \tilde{\mathcal{X}} \\ A \end{bmatrix}$, for all $\bs y$ satisfying $A\bs y = \bs 0$ we have $\tilde{\mathcal{X}} \bs y = \bs 0$ because $C'$ performs a simultaneous diagonalization on the Pauli products of $\mathcal{S}'$, as stated by Proposition~\ref{prop:co-diag_opt}.
Consequently we have $\tilde{M} \bs y = \bs 0$ for all $\bs y \in \mathrm{nullspace}(A)$ and so $\rank(\tilde{M}) = \rank(A)$.
Then we can use Algorithm~\ref{alg:diagonalization} to produce a Clifford circuit $\tilde{C}$ that is a diagonalization network for $\tilde{\mathcal{S}}$ and such that $h(\tilde{C}) = \rank(\tilde{M}) = \rank(A)$.
It follows that the Clifford circuit $C' :: \tilde{C}$ is a diagonalization network for $\mathcal{S}$ containing $h(\tilde{C}) = \rank(A)$ internal Hadamard gates.

To solve the Internal-H-Opt problem optimally it is then essential to find a spanning set of $\{\mathcal{S}\bs y \mid A\bs y = \bs 0, \bs y \in \mathbb{F}_2^m\}$, which we encode in the columns of $\mathcal{S}'$.
Constructing such a spanning set naively by finding all $\bs y \in \mathbb{F}_2^m$ satisfying $A\bs y = \bs 0$ would imply a complexity of $\mathcal{O}(m^3)$ using a Gaussian elimination procedure, which is more computationally expensive than minimizing the number of Hadamard gates via Algorithm~\ref{alg:diagonalization} in the case where $n < m$.
Fortunately, we can actually rely on Algorithm~\ref{alg:diagonalization} to compute $\mathcal{S}'$ with a complexity of $\mathcal{O}(n^2m)$, as it is done in Algorithm~\ref{alg:internal}.
Indeed, if Algorithm~\ref{alg:diagonalization} is used to constructed a diagonalization network $C$ for the sequence of Pauli products $\mathcal{S}J_m$, then the stabilizer generators of the Clifford operator implemented by $C$ are forming a spanning set of $\{\mathcal{S}\bs y \mid A\bs y = \bs 0, \bs y \in \mathbb{F}_2^m\}$.
We demonstrate this statement via the following proposition.

\begin{proposition}\label{prop:internal_h_opt}
    Let $\mathcal{S} = \begin{bmatrix}\mathcal{Z} \\ \mathcal{X}\end{bmatrix}$ be a sequence of $m$ Pauli products, $J_m$ be an exchange matrix of size $m \times m$ and let $U$ be the Clifford operator associated with the Clifford circuit $C$ produced by Algorithm~\ref{alg:diagonalization} when $\mathcal{S}J_m$ is given as input.
    Let $\tilde{\mathcal{S}}$ be the sequence of Pauli products obtained by conjugating all the Pauli products of $\mathcal{S}$ by $U$, then $\tilde{\mathcal{X}}J_m\bs y = \bs 0$ for all $\bs y$ satisfying $\bs y^T A^{(\mathcal{S}J_m)} = \bs 0$, where $A^{(\mathcal{S}J_m)}$ is the commutativity matrix associated with $\mathcal{S}J_m$.
\end{proposition}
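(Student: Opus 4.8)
The plan is to reduce the statement to a clean claim about the reversed sequence and then prove that claim by an induction that follows Algorithm~\ref{alg:diagonalization} column by column. Write $\mathcal{R} = \mathcal{S}J_m$, let $B = A^{(\mathcal{S}J_m)}$ be its commutativity matrix, and set $\bs z = J_m\bs y$, so that $\mathcal{R}\bs y = \mathcal{S}J_m\bs y = \mathcal{S}\bs z$. Because $\tilde{\mathcal{S}} = U^\dag\mathcal{S}U$ we have $P(\tilde{\mathcal{S}}\bs z) = \pm U^\dag P(\mathcal{S}\bs z)U = \pm U^\dag P(\mathcal{R}\bs y)U$, and the $\mathcal{X}$-part of this product is exactly $\tilde{\mathcal{X}}\bs z = \tilde{\mathcal{X}}J_m\bs y$. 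Hence $\tilde{\mathcal{X}}J_m\bs y = \bs 0$ is equivalent to $U^\dag P(\mathcal{R}\bs y)U$ being diagonal, and it suffices to prove: if $\bs y^T B = \bs 0$ then $U^\dag P(\mathcal{R}\bs y)U$ is diagonal, where $U$ is the full Clifford operator produced by Algorithm~\ref{alg:diagonalization} on input $\mathcal{R}$.

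First I would reinterpret the hypothesis. Since $B$ is strictly upper triangular with $B_{i,j}$ recording the anticommutation of columns $i<j$ of $\mathcal{R}$, the condition $(\bs y^T B)_j = 0$ for every $j$ says precisely that the prefix product $\Pi_{<j} := \prod_{i<j,\, y_i=1} P(\mathcal{R}_{:,i})$ commutes with the column $P(\mathcal{R}_{:,j})$, and this holds for every $j$, regardless of whether $j$ lies in the support $S = \{i : y_i = 1\}$.

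The core is an induction on the processing order. Let $W_t$ be the Clifford operator implemented after columns $0,\dots,t-1$ of $\mathcal{R}$ have been processed, and maintain the invariant that $Q_t := W_t^\dag\bigl(\prod_{i\in S,\, i<t}P(\mathcal{R}_{:,i})\bigr)W_t$ is diagonal. Going from $t$ to $t+1$, the algorithm appends a block $V$ of CNOT gates, an optional $S$ gate, and at most one $H$ gate on a qubit $q$, chosen so that $V$ diagonalizes the conjugated column $P_t := W_t^\dag P(\mathcal{R}_{:,t})W_t$. I track $Q_t$ through $V$: conjugation by CNOT and $S$ leaves the $\mathcal{X}$-part of a diagonal Pauli zero, so just before $H_q$ the running product is some diagonal $D'$ and $P_t$ has been reduced to a single $X$ on qubit $q$. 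The commutation $[\Pi_{<t},\,P(\mathcal{R}_{:,t})]=0$ supplied by the hypothesis is conjugation-invariant, so $D'$ commutes with this reduced $P_t$; because $D'$ is diagonal and $P_t$ carries its lone $X$ on $q$, commutation forces $D'_q = I$, so $H_q$ leaves $D'$ diagonal. If $t\in S$ I then multiply by the freshly diagonalized column $t$ (a product of two diagonal Paulis stays diagonal) to recover the invariant for $Q_{t+1}$; if $t\notin S$ the prefix product is unchanged. At $t=m$ this yields that $U^\dag P(\mathcal{R}\bs y)U$ is diagonal.

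The main obstacle is the single-$H$ step: the whole argument hinges on turning the global, static hypothesis $\bs y^T B = \bs 0$ into the purely local fact that, at the exact instant $H_q$ is applied, the accumulated diagonal product carries $I$ on qubit $q$. Making this rigorous requires checking that (i) the commutation between $\Pi_{<t}$ and column $t$ is exactly the entry $(\bs y^T B)_t$, (ii) this commutation survives conjugation by the CNOT and $S$ gates of the block, and (iii) diagonality of the running product is likewise preserved by those gates, so that the commutation can be read off at the reduced Pauli $P_t$ right before the Hadamard. The remaining bookkeeping—empty blocks when a column is already diagonal, and the case split on $t\in S$ versus $t\notin S$—is routine once this local statement is established.
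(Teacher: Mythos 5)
Your proof is correct, and its skeleton coincides with the paper's: an induction over the columns of $\mathcal{S}J_m$ in the order Algorithm~\ref{alg:diagonalization} processes them, with the invariant that the conjugated prefix product supported on $\bs y$ is diagonal --- the paper states this same invariant in matrix form as $\mathcal{X}^{(i)}J_m\bs y^{(i)} = \bs 0$, which is precisely the diagonality of your $Q_t$. Where you genuinely differ is the justification of the Hadamard step, which is where all the content lives. The paper proves the row-sum identity of Equation~\ref{eq:sum_xjm}, dots it with $\bs y^{(i)}$ to get Equation~\ref{eq:k_sum}, and then recovers the orthogonality of the single row altered by $H_q$ from the orthogonality of all the other rows indexed by $K$. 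You instead argue locally at the operator level: since the diagonal prefix product $D'$ commutes with the reduced column, and that column is diagonal everywhere except for an $X$ on qubit $q$, the only position that could contribute an anticommutation is $q$, which forces $D'_q = I$; hence $H_q$ fixes $D'$ outright. Both routes consume the hypothesis $\bs y^T A^{(\mathcal{S}J_m)} = \bs 0$ at exactly the same entry (the paper via $\bs y^T A^{(\mathcal{S}J_m)}_{:,i} = 0$, you via the commutation of $\Pi_{<t}$ with column $t$), and both need the routine observation that CNOT and $S$ preserve diagonality. Your local argument is arguably more transparent, and it incidentally establishes the slightly stronger fact that the prefix product carries $I$ (not merely something in $\{I,Z\}$) on the Hadamard qubit, which is exactly what is needed for $H_q$ to be harmless. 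One small imprecision to fix in the write-up: after the CNOT/$S$ block the reduced column is not ``a single $X$ on qubit $q$'' --- it may still carry $Z$'s on other qubits --- but since $D'$ is diagonal those positions cannot anticommute with $\{I,Z\}$ entries, so your conclusion $D'_q = I$ stands unchanged.
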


\begin{proof}
Let $C^{(i)}$ be the circuit obtained after the $i$th recursive call to Algorithm~\ref{alg:diagonalization} when $\mathcal{S}J_m$ is given as input, as such $C^{(i)}$ is a diagonalization network for the first $i+1$ columns of $SJ_m$.
And let $\mathcal{S}^{(i)} = \begin{bmatrix}\mathcal{Z}^{(i)} \\ \mathcal{X}^{(i)}\end{bmatrix}$ be the sequence of Pauli products resulting from conjugating $\mathcal{S}$ by the Clifford operator associated with the circuit $C^{(i)}$.
We defined $\bs y^{(i)} \in \mathbb{F}_2^m$ as follows:
\begin{equation}
    y^{(i)}_j =
    \begin{cases}
    y_j & \text{if $j \leq i$},\\
    0 & \text{otherwise},
    \end{cases}
\end{equation}
where $\bs y \in \mathbb{F}_2^m$ satisfies $\bs y^T A^{(\mathcal{S}J_m)} = \bs 0$.

In the case where $i = 0$, the equality $\mathcal{X}^{(0)}J_m\bs y^{(0)} = \bs 0$ is satisfied because the Pauli product encoded by the first column of $\mathcal{S}^{(0)}J_m$ is diagonal and $y^{(0)}_j = 0$ for all $j > 0$.
More generally, the Pauli product encoded by the $i$th column of $\mathcal{S}^{(i)}J_m$ is diagonal, and so the following holds:
\begin{equation}\label{eq:sum_xjm}
    \bigoplus_{k \in K} \mathcal{X}_k^{(i)}J_m = A_{i}^{(\mathcal{S}J_m)} \oplus A_{:,i}^{(\mathcal{S}J_m)}
\end{equation}
where $K = \{k \mid \mathcal{Z}_{k, m-i-1}^{(i)} = 1\}$.
Here the $i$th column of $A^{(\mathcal{S}J_m)}$ must be added to the $i$th row of $A^{(\mathcal{S}J_m)}$ to form the vector describing how the $i$th Pauli rotation commutes or anticommutes with the other Pauli rotations of the sequence.
In this sense, it is a generalization of Equation~\ref{eq:diag_sum} to the other rows of $A^{(\mathcal{S}J_m)}$.
Equation~\ref{eq:sum_xjm} entails
\begin{equation}
    \left[\bigoplus_{k \in K} \mathcal{X}_k^{(i)}J_m\right]^T \bs y^{(i)} = \left[A_{i}^{(\mathcal{S}J_m)} \oplus A_{:,i}^{(\mathcal{S}J_m)}\right]^T \bs y^{(i)}
\end{equation}
Moreover, we have $\left[A_{i}^{(\mathcal{S}J_m)}\right]^T \bs y^{(i)} = 0$ because $A_{i,j}^{(\mathcal{S}J_m)} = 0$ for all $j \le i$ and $y_j^{(i)} = 0$ for all $j > i$.
And we also have $\left[A_{:,i}^{(\mathcal{S}J_m)}\right]^T \bs y^{(i)} = 0$ because $\left[A_{:,i}^{(\mathcal{S}J_m)}\right]^T \bs y^{(i)} = \left[A_{:,i}^{(\mathcal{S}J_m)}\right]^T \bs y$ as $A_{j,i}^{(\mathcal{S}J_m)} = 0$ for all $j > i$ and $\left[A_{:,i}^{(\mathcal{S}J_m)}\right]^T \bs y = \bs y^T A_{:,i}^{(\mathcal{S}J_m)} = 0$ by definition.
Thus, we proved that the following holds: 
\begin{equation}\label{eq:k_sum}
    \left[\bigoplus_{k \in K} \mathcal{X}_k^{(i)}J_m\right]^T \bs y^{(i)} = 0
\end{equation}
where $K = \{k \mid \mathcal{Z}_{k, m-i-1}^{(i)} = 1\}$.

Let's assume that $\mathcal{X}^{(i)} J_m \bs y^{(i)} = \bs 0$, we can then distinguish two cases for the $(i+1)$th iteration of Algorithm~\ref{alg:diagonalization}.
In the case where the $(i+1)$th Pauli product of $\mathcal{S}^{(i)}J_m$ is diagonal, the circuit $C^{(i+1)}$ can be obtained from $C^{(i)}$ by appending a $\{\mathrm{CNOT}, S\}$ circuit to it.
If the Pauli product encoded by $\mathcal{S}^{(i)} J_m \bs y^{(i)}$ is diagonal, as we assumed, then the Pauli product encoded by the vector $\mathcal{S}^{(i+1)} J_m \bs y^{(i)}$ is also diagonal as no Hadamard gate was appended to $C^{(i)}$ to derive $C^{(i+1)}$ from it.
In addition, the $(i+1)$th Pauli product of $\mathcal{S}^{(i+1)}J_m$ is also diagonal which imply that the Pauli product encoded by the vector $\mathcal{X}^{(i+1)} J_m \bs y^{(i+1)}$ is diagonal and so $\mathcal{X}^{(i+1)} J_m \bs y^{(i+1)} = \bs 0$.
Therefore, in such case where the $(i+1)$th Pauli product of $\mathcal{S}^{(i)}J_m$ is diagonal, the equality $\mathcal{X}^{(i)} J_m \bs y^{(i)} = \bs 0$ implies that $\mathcal{X}^{(i+1)} J_m \bs y^{(i+1)} = \bs 0$.

In the case where the $(i+1)$th Pauli product of $\mathcal{S}^{(i)}J_m$ is not diagonal, 
the circuit $C^{(i+1)}$ can be constructed from $C^{(i)}$ by appending a $\{\mathrm{CNOT}, S\}$ circuit to it and a final Hadamard gate on some qubit $j$.
Let $\hat{C}^{(i+1)}$ be the circuit resulting from appending this $\{\mathrm{CNOT}, S\}$ circuit to $C^{(i)}$, i.e.\ $\hat{C}^{(i+1)}$ corresponds to the circuit $C^{(i+1)}$ whose last gate, which is a Hadamard gate, has been removed.
Let $\hat{\mathcal{S}}^{(i+1)}$ be the sequence of Pauli products obtained by conjugating all the Pauli products of $\mathcal{S}$ by the Clifford operator associated with $\hat{C}^{(i+1)}$.
Using the same reasoning as before, if the Pauli product encoded by $\mathcal{S}^{(i)} J_m \bs y^{(i)}$ is diagonal then the Pauli product encoded by the vector $\hat{\mathcal{S}}^{(i+1)} J_m \bs y^{(i)}$ is also diagonal as no Hadamard gate was appended to $C^{(i)}$ to derive $\hat{C}^{(i+1)}$ from it, and so we have $\hat{\mathcal{X}}^{(i+1)} J_m \bs y^{(i)} = \bs 0$.

The circuit $C^{(i+1)}$ can be obtained from $\hat{C}^{(i+1)}$ by appending a Hadamard gate to it on some qubit $j$.
Therefore, $\mathcal{X}_k^{(i+1)} = \hat{\mathcal{X}}_k^{(i+1)}$ for all $k \neq j$, and so
\begin{equation}
    \left[\mathcal{X}_k^{(i+1)}J_m\right]^T \bs y^{(i)} = \left[\hat{\mathcal{X}}_k^{(i+1)}J_m\right]^T \bs y^{(i)} = 0
\end{equation}
for all $k \neq j$.
The $(i+1)$th Pauli product of $\mathcal{S}^{(i+1)}J_m$ is diagonal which means that the $(i+1)$th column of $\mathcal{X}^{(i+1)}J_m$ is equal to $\bs 0$, and so the equality holds as well for $\bs y^{(i+1)}$:
\begin{equation}\label{eq:1}
    \left[\mathcal{X}_k^{(i+1)}J_m\right]^T \bs y^{(i+1)} = \left[\mathcal{X}_k^{(i+1)}J_m\right]^T \bs y^{(i)} = 0
\end{equation}
for all $k \neq j$.
Notice that $j \in K$ where $K = \{k \mid \mathcal{Z}_{k, m-i-1}^{(i+1)} = 1\}$, then from Equation~\ref{eq:k_sum} we can infer that 
\begin{equation}\label{eq:2}
    \left[\mathcal{X}_j^{(i+1)}J_m\right]^T \bs y^{(i+1)} \oplus \left[\bigoplus_{k \in \hat{K}} \mathcal{X}_k^{(i+1)}J_m\right]^T \bs y^{(i+1)} = 0
\end{equation}
where $\hat{K} =  K \setminus \{j\}$.
From Equation~\ref{eq:1} we can deduce that the second term of Equation~\ref{eq:2} is equal to $0$, therefore we have
\begin{equation}
    \left[\mathcal{X}_j^{(i+1)}J_m\right]^T \bs y^{(i+1)} = 0
\end{equation}
which, when combined with Equation~\ref{eq:1}, entails $\mathcal{X}^{(i+1)} J_m \bs y^{(i+1)} = \bs 0$ and concludes the proof of Proposition~\ref{prop:internal_h_opt}.\\
\end{proof}

We can now demonstrate Theorem~\ref{thm:internal_h} on the basis of Proposition~\ref{prop:internal_h_opt}.
\begin{proof}[Proof of Theorem~\ref{thm:internal_h}]
Let $\mathcal{S}'$ be as defined in Algorithm~\ref{alg:internal} and let $C$ be the circuit produced by Algorithm~\ref{alg:internal} when $\mathcal{S} = \begin{bmatrix}\mathcal{Z} \\ \mathcal{X}\end{bmatrix}$ is given as input.
As $C$ is a diagonalization network for $\begin{bmatrix} \mathcal{S}' & \mathcal{S} \end{bmatrix}$ it can be splitted in two subcircuits such that $C = C_1 :: C_2$, where $C_1$ and $C_2$ are diagonalization networks for $\mathcal{S}'$ and $\tilde{\mathcal{S}}$ respectively with $\tilde{\mathcal{S}} = \begin{bmatrix}\tilde{\mathcal{Z}} \\ \tilde{\mathcal{X}}\end{bmatrix} = U^\dag\mathcal{S}U$ where $U$ is the Clifford operator associated with $C_1$.
The number of internal Hadamard gates in $C$ is therefore equal to the number of Hadamard gates in $C_2$, proving Theorem~\ref{thm:internal_h} can then be done by proving that $h(C_2) = \rank(\tilde{M}) = \rank(A^{(\mathcal{S})})$ where $\tilde{M} = \begin{bmatrix} \tilde{\mathcal{X}} \\ A^{(\mathcal{S})} \end{bmatrix}$.

The Pauli products encoded in the matrix $\mathcal{S}J_m$ are the same as in $\mathcal{S}$ but in reverse order.
Consequently we have $A_{i, j}^{(\mathcal{S})} = A_{m-j-1, m-i-1}^{(\mathcal{S}J_m)}$, therefore by reversing the order of the rows and columns of $A^{(\mathcal{S}J_m)}$ and transposing it to obtain a strictly upper triangular matrix we get the matrix $A^{(\mathcal{S})}$:
\begin{equation}
    \left[J_m A^{(\mathcal{S}J_m)} J_m\right]^T = A^{(\mathcal{S})}
\end{equation}
From this we can deduce that
\begin{equation}
\begin{aligned}
    && A^{(\mathcal{S})}\bs y                               &= \bs 0 \\
    \Rightarrow && \left[J_m A^{(\mathcal{S}J_m)} J_m\right]^T\bs y   &= \bs 0 \\
    \Rightarrow && \bs y^T J_m A^{(\mathcal{S}J_m)} J_m               &= \bs 0 \\
    \Rightarrow && \overline{\bs y}^T A^{(\mathcal{S}J_m)}            &= \bs 0
\end{aligned}
\end{equation}
where $\overline{\bs y} = J_m \bs y$.
And based on Proposition~\ref{prop:internal_h_opt} we have
\begin{equation}
\begin{aligned}
                && \tilde{\mathcal{X}}J_m\overline{\bs y}   &= \bs 0 \\
    \Rightarrow && \tilde{\mathcal{X}}\bs y                 &= \bs 0
\end{aligned}
\end{equation}
Thus, for all $\bs y \in \mathrm{nullspace}(A^{(\mathcal{S})})$ we have $\tilde{\mathcal{X}} \bs y = \bs 0$ and therefore $\tilde{M} \bs y = \bs 0$, which implies that $h(C_2) = \rank(\tilde{M}) = \rank(A^{(\mathcal{S})})$ and concludes the proof of Theorem~\ref{thm:internal_h}.\\
\end{proof}

\section{Improving the complexity}\label{sec:improving}

\begin{algorithm}
    \caption{H-Opt}
    \label{alg:circ_h_opt}
	\SetAlgoLined
	\SetArgSty{textnormal}
	\SetKwFunction{proc}{HOpt}
	\SetKwInput{KwInput}{Input}
	\SetKwInput{KwOutput}{Output}
    \KwInput{A Clifford$+R_Z$ circuit $C$ and a tableau $\mathcal{T} = \begin{bmatrix}\bs s^T\\ \mathcal{Z} \\\mathcal{X}\end{bmatrix}$.}
    \KwOutput{A circuit $C_{out}$ and a tableau $\mathcal{T}_{out}$, such that $C_{out}$ is a re-synthesis of $C$ and implements the same sequence of Pauli rotations as $C$ up to an initial and final Clifford operator represented by $\mathcal{T}$ and $\mathcal{T}^{-1}_{out}$ respectively.}
	\SetKwProg{Fn}{procedure}{}{}

    \Fn{\proc{$C$, $\mathcal{T}$}}{
        $C_{out} \leftarrow$ new empty circuit\\
        \ForEach{gate $G \in C$}{
            \If{$G$ is Clifford}{
                Prepend $G^\dag$ to $\mathcal{T}$\\
            }
            \If{$G$ is a non-Clifford $R_{Z_k}(\theta)$ gate}{
                \If{$\exists i$ such that $\mathcal{X}_{i,k} = 1$}{
                    \ForEach{$j \in \{j \mid \mathcal{X}_{j,k} = 1\}\setminus \{i\}$}{
                        $C_{out} \leftarrow C_{out} :: \mathrm{CNOT}_{i, j}$\\
                        Append $\mathrm{CNOT}_{i, j}$ to $\mathcal{T}$\\
                    }
                    \If{$\mathcal{Z}_{i,k} = 1$}{
                        $C_{out} \leftarrow C_{out} :: S_i$\\
                        Append $S_i$ to $\mathcal{T}$\\
                    }
                    $C_{out} \leftarrow C_{out} :: H_i$\\
                    Append $H_i$ to $\mathcal{T}$\\
                }
                $i \leftarrow$ any value satisfying $\mathcal{Z}_{i,k} = 1$\\
                $\tilde{C} \leftarrow$  new empty circuit\\
                \ForEach{$j \in \{j \mid \mathcal{Z}_{j,k} = 1\}\setminus \{i\}$}{
                    $\tilde{C} \leftarrow \tilde{C} :: \mathrm{CNOT}_{j, i}$\\
                }
                \If{$s_k = 1$}{
                    $\tilde{C} \leftarrow \tilde{C} :: X_i$\\
                }
                $C_{out} \leftarrow C_{out} :: \tilde{C} :: R_{Z_i}(\theta) :: \tilde{C}^{-1}$
            }
        }
        \Return $(C_{out}, \mathcal{T})$
	}
\end{algorithm}

\begin{algorithm}
    \caption{Internal-H-Opt}
    \label{alg:circ_internal_h_opt}
	\SetAlgoLined
	\SetArgSty{textnormal}
	\SetKwFunction{proc}{InternalHOpt}
	\SetKwInput{KwInput}{Input}
	\SetKwInput{KwOutput}{Output}
	\KwInput{A Cliffod$+R_Z$ circuit $C$.}
    \KwOutput{A circuit that is a re-synthesis of $C$ and which implements the same sequence of Pauli rotations as $C$ with a minimal number of Hadamard gates and internal Hadamard gates.}
	\SetKwProg{Fn}{procedure}{}{}

	\Fn{\proc{$C$}}{
        $\mathcal{T} \leftarrow$ new identity tableau\\
        \ForEach{Clifford gate $G \in C$}{
            Prepend $G^\dag$ to $\mathcal{T}$\\
        }
        $(\tilde{C}, \tilde{\mathcal{T}}) \leftarrow \texttt{HOpt}(C^{-1}, \mathcal{T})$\\
        $C_{\tilde{\mathcal{T}}} \leftarrow \texttt{CliffordSynthesis}(\tilde{\mathcal{T}})$\\
        $(C_{out}, \mathcal{T}_f) \leftarrow \texttt{HOpt}(C, \tilde{\mathcal{T}})$\\
        \Return $C_{\tilde{\mathcal{T}}} :: C_{out} :: \texttt{CliffordSynthesis}(\mathcal{T}_f^{-1})$
	}
\end{algorithm}

Algorithm~\ref{alg:diagonalization} and~\ref{alg:internal} are taking a sequence of Pauli products $\mathcal{S}$ as input and output a diagonalization network for $\mathcal{S}$.
In order to use these algorithms to minimize the number of Hadamard gates, or internal Hadamard gates, in a circuit $C$ it is then required to first extract from $C$ the sequence of Pauli products $\mathcal{S}$ for which the diagonalization network must be constructed.
This procedure can be done with a complexity $\mathcal{O}(nM)$ by using a tableau, where $n$ is the number of qubits and $M$ is the number of gates in $C$.
In this section, we will see how we can merge the extraction of the sequence of Pauli products $\mathcal{S}$ with our algorithms to obtain the desired re-synthesis of $C$ with a complexity of $\mathcal{O}(nM + n^2h)$ instead of $\mathcal{O}(nM + n^2m)$ where $m$ is the number of Pauli products in $\mathcal{S}$ and $h \leq m$ is the minimal number of Hadamard gates required to construct a diagonalization network for $\mathcal{S}$.
We first explain our notations related to the tableau representation, commonly used to represent a Clifford operator.
In Subsection~\ref{sec:h_opt_alg} we present an algorithm which performs the re-synthesis of a sequence of Pauli rotations implemented by a given circuit up to a final Clifford operator and with a minimal number of Hadamard gates.
Finally, in Subsection~\ref{sec:internal_h_opt_alg}, we present an algorithm which produces a circuit that is a re-synthesis of a given circuit and which implements the same sequence of Pauli rotations but with a minimal number of Hadamard gates and internal Hadamard gates.\\

\noindent\textbf{The tableau representation.}
A tableau encodes $2n$ generators which can be represented by $2n$ independent Pauli products along with a phase for each one of these Pauli products.
We can thus reuse our method of encoding for a sequence of Pauli products $\mathcal{S}$ and represent a tableau by a block matrix $\mathcal{T} = \begin{bmatrix}\bs s^T \\\mathcal{Z} \\ \mathcal{X}\end{bmatrix}$ of size $(2n + 1)\times 2n$ where $n$ is the number of qubits.
The first row of $\mathcal{T}$ corresponds to a vector $\bs s \in \{0, 1\}^{2n}$ which encodes the phases of the generators, the subsequent $n$ rows of $\mathcal{T}$ are forming the submatrix $\mathcal{Z}$ and the last $n$ rows of $\mathcal{T}$ are forming the submatrix $\mathcal{X}$.
The $j$th column of $\mathcal{T}$ is then encoding the $j$th generator: $s_j$ encodes its phase which corresponds to $(-1)^{s_j}$ and $(\mathcal{Z}_{i,j}, \mathcal{X}_{i,j})$ encodes its $i$th Pauli matrix, such that the values $(0, 0), (0, 1), (1, 1)$ and $(1, 0)$ are corresponding to the Pauli matrices $I, X, Y$ and $Z$ respectively.
The first $n$ columns of $\mathcal{T}$ are encoding the stabilizer generators, whereas the last $n$ columns of $\mathcal{T}$ are encoding the destabilizer generators.
The identity tableau $\mathcal{T}$ associated with an empty circuit is such that the matrix $\begin{bmatrix}\mathcal{Z} \\ \mathcal{X}\end{bmatrix}$ is forming the identity matrix and $\bs s = \bs 0$, said otherwise the $i$th stabilizer generator of $\mathcal{T}$ is $Z_i$ and the $i$th destabilizer generator of $\mathcal{T}$ is $X_i$.
The inverse tableau of $\mathcal{T}$, denoted by $\mathcal{T}^{-1}$, is the tableau associated with the Clifford operator $U^\dag$ where $U$ is the Clifford operator associated with $\mathcal{T}$.
Analogously, the inverse of a circuit $C$, denoted $C^{-1}$, is the circuit obtained from $C$ by replacing every gate $G$ by $G^\dag$ and by reversing the order of its gates.
Let $\mathcal{S}$ be a sequence of Pauli products, if $\tilde{\mathcal{S}} = U^\dag \mathcal{S} U$ then we will equivalently say that $\tilde{\mathcal{S}} = \mathcal{T}^{-1}\mathcal{S}\mathcal{T}$ where $\mathcal{T}$ is the tableau associated with the Clifford operator $U$.

Let $C$ be a Clifford circuit such that its associated Clifford operator is represented by a tableau $\mathcal{T}$.
If a Clifford gate from the set $\{\mathrm{CNOT}, S, H\}$ is appended to $C$ then the generators of $\mathcal{T}$ can be updated accordingly with $\mathcal{O}(n)$ operations, where $n$ is the number of qubits.
The operations to perform on the Pauli products encoded by $\mathcal{T}$ are the same as the one depicted in Figure~\ref{fig:clifford_operations}, similar operations can be performed to update the phases associated with the Pauli products in $\mathcal{O}(n)$~\cite{aaronson2004improved}.
Also, if a Clifford gate from the set $\{\mathrm{CNOT}, S, H\}$ is prepended to $C$, then $\mathcal{T}$ can also be updated with $\mathcal{O}(n)$ operations~\cite{gidney2021stim}.
When $\mathcal{T}$ is updated in such manner we will say that we append, or prepend, a gate to $\mathcal{T}$.
As explained in Section~\ref{sec:extension}, a Clifford operator, represented by a tableau $\mathcal{T}$ and acting on $n$ qubits, can be implemented over the $\{X, \mathrm{CNOT}, S, H\}$ gate set with a complexity of $\mathcal{O}(n^3)$ and with a minimal number of Hadamard gates by first diagonalizing its stabilizer generators using Algorithm~\ref{alg:diagonalization}, and then finishing its synthesis using only $\{X, \mathrm{CNOT}, S\}$ gates.
We use the term $\texttt{CliffordSynthesis}$ to denote this procedure in our algorithms.

\subsection{H-Opt algorithm}\label{sec:h_opt_alg}
Consider the algorithm whose pseudo-code is given in Algorithm~\ref{alg:circ_h_opt} and which takes a circuit $C$ and a tableau $\mathcal{T}_{in}$ as input, and let $\mathcal{S}$ be the sequence of Pauli products associated with the sequence of Pauli rotations implemented by $C$.
This algorithm outputs a circuit $C_{out}$ and a tableau $\mathcal{T}$ such that $C_{out}$ is a re-synthesis of $C$ and implements the same sequence of Pauli rotations as $C$ up to an initial and final Clifford operator represented by $\mathcal{T}_{in}$ and $\mathcal{T}^{-1}_{out}$ respectively.

Algorithm~\ref{alg:circ_h_opt} is composed of a loop iterating over the gates of $C$ and which contains two distinct cases: either the current gate $G$ is a Clifford gate or it is not.
If $G$ is Clifford gate then $G^\dag$ is prepended into $\mathcal{T}$.
If $G$ is a non-Clifford $R_{Z_i}(\theta)$ gate then we must compute the Pauli rotation that should be appended to $C_{out}$.
To do so we can first compute which Pauli rotation is actually being implemented by $C$ by pulling all the Clifford gates preceding $G$ through the Pauli rotation $R_{Z_i}(\theta)$.
The Pauli rotation obtained is then $UR_{Z_i}(\theta)U^\dag$ where $U$ is the Clifford operator associated with the Clifford circuit composed of all the Clifford gates preceding $G$.
Then, to be appended into $C_{out}$, the Pauli rotation must also be propagated through the initial tableau $\mathcal{T}_{in}$, we will denote $V$ the Clifford operator associated with $\mathcal{T}_{in}$.
Finally, the Pauli rotation must be propagated through all the Clifford gates that are in $C_{out}$ so far, we denote $W$ the associated Clifford operator.
The Pauli rotation to append to the circuit $C_{out}$ is therefore $W^\dag V^\dag U R_{Z_i}(\theta) U^\dag V W$.
We can notice that the Clifford operator $U^\dag V W$ is in fact associated with the tableau $\mathcal{T}$.
Indeed, $\mathcal{T}$ is initially equal to $\mathcal{T}_{in}$, the inverse Clifford gates that are preceding $G$ in $C$ has been prepended to $\mathcal{T}$ and the Clifford gates that are in $C_{out}$ so far has been appended to $\mathcal{T}$.
The Pauli operator $P$ satisfying $R_P(\theta) = W^\dag V^\dag U R_{Z_i}(\theta) U^\dag V W$ is therefore the $i$th stabilizer generator of $\mathcal{T}$, which is encoded by the $i$th column of $\mathcal{T}$.

The Pauli rotation $R_P(\theta)$ can then be implemented by first performing the synthesis of a Clifford operator $U$ such that $U^\dag R_P(\theta) U$ is diagonal, and then by performing the synthesis of a Clifford operator $V$ satisfying $V^\dag U^\dag R_P(\theta) UV = R_{Z_i}(\theta)$, for some qubit $i$.
The Clifford operator $V$ can be synthesized using only $\{X, \mathrm{CNOT}\}$ gates as $U^\dag R_P(\theta)U$ is diagonal, this is done in Algorithm~\ref{alg:circ_h_opt} by constructing the circuit $\tilde{C}$.
Once the operators $U$ and $V$ have been implemented, the gate $R_{Z_i}(\theta)$ can be appended to the circuit.
The operator $V^\dag$ does not necessarily need to be implemented, but it is actually implemented in Algorithm~\ref{alg:circ_h_opt} to avoid additional operations that would be required to update the tableau $\mathcal{T}$.
We should not treat the operator $U^\dag$ the same way as it would increase the number of Hadamard gates in the circuit, $U^\dag$ is therefore not implemented in Algorithm~\ref{alg:circ_h_opt} and the tableau $\mathcal{T}$ is updated accordingly by appending the gates realizing the implementation of $U$ to it.
Note that the method utilized to implement $U$ is the same as the one in Algorithm~\ref{alg:diagonalization}, which uses exactly one Hadamard gate when $P$ is not diagonal.
It follows from the results in Section~\ref{sec:h_opt} that Algorithm~\ref{alg:circ_h_opt} can be used to solve the H-Opt problem for $\tilde{\mathcal{S}} = \mathcal{T}_{in}^{-1}\mathcal{S}\mathcal{T}_{in}$.
More concretely, by removing all the non-Clifford $R_Z$ gates from the circuit produced by Algorithm~\ref{alg:circ_h_opt} we obtain a diagonalization network which solves the H-Opt problem for $\tilde{\mathcal{S}}$.

Let $C'$ and $C'_{out}$ be the Clifford circuits obtained by removing all the non-Clifford $R_Z$ gates from $C$ and $C_{out}$ respectively, and let $C_{\mathcal{T}_{in}}$ be a Clifford circuit whose Clifford operator is associated with the tableau $\mathcal{T}_{in}$.
In the end of Algorithm~\ref{alg:circ_h_opt}, the tableau $\mathcal{T}$ is associated with the Clifford operator implemented by the circuit $C_{\mathcal{T}} = C'^{-1} :: C_{\mathcal{T}_{in}} :: C'_{out}$.
As $C_{out}$ implements the sequence of Pauli rotations associated with $\tilde{\mathcal{S}} = \mathcal{T}_{in}^{-1}\mathcal{S}\mathcal{T}_{in}$ up to a final Clifford operator implemented by $C'^{-1}_{out}$, it follows that $C_f = C_{\mathcal{T}_{in}} :: C_{out} :: C^{-1}_{\mathcal{T}}$ is a re-synthesis of $C$ and implements the same sequence of Pauli rotations as $C$.
If the input tableau $\mathcal{T}_{in}$ is the identity tableau, or can be implemented with no Hadamard gates, and if $C_{\mathcal{T}}$ is implemented with a minimal number of Hadamard gates using the procedure described in Section~\ref{sec:extension}, then $C_f$ is a re-synthesis of $C$ which implements the same sequence of Pauli rotations with a minimal number of Hadamard gates.\\

\noindent\textbf{Complexity analysis.}
The main loop of Algorithm~\ref{alg:circ_h_opt} is performing $M$ iterations where $M$ is the number of gates in the input circuit.
At each iteration, if the current gate is a Clifford gate then it is prepended to $\mathcal{T}$ which is done in $\mathcal{O}(n)$ operations, where $n$ is the number of qubits in the input circuit.
If the current gate is a non-Clifford $R_{Z_k}(\theta)$ gate then the algorithm append $\mathcal{O}(n)$ gates to $C_{out}$.
In the case where the $k$th stabilizer generator of $\mathcal{T}$ is not diagonal then a subset of these gates are appended to $\mathcal{T}$ which takes $\mathcal{O}(n)$ operations for each gates.
This happens exactly $h$ times where $h$ is the number of Hadamard gates in the output circuit $C_{out}$, which implies a cost of $\mathcal{O}(n^2h)$ operations.
Thus, the overall complexity of Algorithm~\ref{alg:circ_h_opt} is $\mathcal{O}(nM + n^2h)$.

\subsection{Internal-H-Opt algorithm}\label{sec:internal_h_opt_alg}
Algorithm~\ref{alg:circ_internal_h_opt} is based on the procedure explained in Section~\ref{sec:internal} and utilized by Algorithm~\ref{alg:internal} to synthesize a diagonalization network for a sequence of Pauli products with a minimal number of internal Hadamard gates.
It takes a Clifford$+R_Z$ circuit $C$ as input and outputs a circuit which is a re-synthesis of $C$ and which implements the same sequence of Pauli rotations as $C$ with a minimal number of Hadamard gates and internal Hadamard gates.

As explained in Section~\ref{sec:internal}, in order to solve the Internal-H-Opt problem for a sequence of $m$ Pauli products $\mathcal{S}$ it is necessary to find a Clifford operator $U$ that minimizes $\rank(\tilde{M})$ where $\tilde{M} = \begin{bmatrix}\tilde{\mathcal{X}}\\ A^{(\mathcal{S})}\end{bmatrix}$, $A^{(\mathcal{S})}$ is the commutativity matrix associated with $\mathcal{S}$ and $\tilde{\mathcal{S}} = \begin{bmatrix}\tilde{\mathcal{Z}} \\ \tilde{\mathcal{X}}\end{bmatrix} = U^\dag \mathcal{S}U$.
We proved that the Clifford operator associated with the circuit produced by Algorithm~\ref{alg:diagonalization} when $\mathcal{S}J_m$, where $J_m$ is an exchange matrix of size $m\times m$, is given as input is satisfying this property.
Let $\mathcal{S}$ be a sequence of $m$ Pauli products associated with the sequence of Pauli rotations implemented by a Clifford$+R_Z$ circuit $C$, then the Clifford operator $U$ described above can be computed by the \texttt{HOpt} procedure described in Algorithm~\ref{alg:circ_h_opt}.
To do so, the circuit $C^{-1}$ and the tableau $\mathcal{T}$ must be given as input to the \texttt{HOpt} procedure, such that $\mathcal{T}$ is the tableau associated with the Clifford operator implemented by the circuit $C'^{-1}$ where $C'$ is the Clifford circuit obtained by removing all the non-Clifford $R_Z$ gates of $C$.
The circuit $C^{-1}$ is provided so that the Pauli rotations are processed in reversed order by the \texttt{HOpt} procedure.
For the tableau $\mathcal{T}$, it must be provided because the circuit $C^{-1}$ does not necessarily implements the same sequence of Pauli rotations as $C$, however the circuit $C' :: C^{-1}$ do implement the same sequence of Pauli rotations as the circuit $C$.
We can be convinced by this fact by noticing that the Clifford operator formed by all the Clifford gates preceding a non-Clifford gate in $C$ is the same as the Clifford operator formed by all the Clifford gates preceding the corresponding non-Clifford gate in $C' :: C^{-1}$.
Then, as shown in Section~\ref{sec:h_opt_alg}, when the \texttt{HOpt} procedure is executed with $C^{-1}$ and $\mathcal{T}$ as parameters, it will produce a circuit $\tilde{C}$ and a tableau $\tilde{\mathcal{T}}$ associated with the Clifford operator implemented by the circuit $C' :: C'^{-1} :: \tilde{C}'$, which is equivalent to the circuit $\tilde{C}'$, and where $\tilde{C}'$ is the Clifford circuit obtained by removing all the non-Clifford $R_Z$ gates from $\tilde{C}$.
The circuit $\tilde{C}'$ then solves the H-Opt problem for $\mathcal{S}J_m$, and is an implementation of the Clifford operator associated with the tableau $\tilde{\mathcal{T}}$.
From the results of Section~\ref{sec:internal}, it follows that if $\tilde{\mathcal{S}} = \begin{bmatrix}\tilde{\mathcal{Z}} \\ \tilde{\mathcal{X}}\end{bmatrix} = \tilde{\mathcal{T}}^{-1} \mathcal{S}\tilde{\mathcal{T}}$ then $\rank(\tilde{M}) = \rank(A^{(\mathcal{S})})$ where $\tilde{M} = \begin{bmatrix}\tilde{\mathcal{X}}\\ A^{(\mathcal{S})}\end{bmatrix}$.

Algorithm~\ref{alg:circ_internal_h_opt} then performs the synthesis of the Clifford operator associated with $\tilde{\mathcal{T}}$ with a minimal number of Hadamard gates, the Clifford circuit $C_{\tilde{\mathcal{T}}}$ obtained will be the initial Clifford circuit of the circuit produced by Algorithm~\ref{alg:circ_internal_h_opt}.
The algorithm then calls a second time the \texttt{HOpt} procedure with $C$ and $\tilde{\mathcal{T}}$ given as parameters in order to implement the sequence of Pauli rotations associated with $\tilde{\mathcal{S}}$ with a minimal number of internal Hadamard gates and up to a final Clifford circuit.
The tableau $\tilde{\mathcal{T}}$ must be given as input so that the sequence of Pauli rotations implemented is the one associated with the sequence of Pauli products $\tilde{\mathcal{S}}$ and not $\mathcal{S}$.
The procedure \texttt{HOpt} will then produce a circuit $C_{out}$ and a tableau $\mathcal{T}_f$ such that $C'_{out}$ is solving the H-Opt problem for $\tilde{\mathcal{S}}$ and $\mathcal{T}_f$ is associated with the Clifford operator implemented by $C_f = C'^{-1} :: C_{\tilde{\mathcal{T}}} :: C'_{out}$, where $C'$ and $C'_{out}$ are the circuits obtained by removing all the non-Clifford $R_Z$ gates from $C$ and $C_{out}$ respectively.
We can then deduce that $C_{\tilde{\mathcal{T}}} :: C_{out} :: C_f^{-1}$ is implementing the same sequence of Pauli rotations as $C$ and the Clifford operator formed by all the Clifford gates of this circuit is the same as the Clifford operator formed by all the Clifford gates of $C$.
Thus, the circuit produced by Algorithm~\ref{alg:circ_internal_h_opt} is a re-synthesis of $C$ and it implements the same sequence of Pauli rotations with a minimal number of Hadamard gates and internal Hadamard gates.\\

\noindent\textbf{Complexity analysis.}
Let $h$ be the number of Hadamard gates within the circuit produced by Algorithm~\ref{alg:circ_internal_h_opt}, and let $n$ be the number of qubits in $C$.
The algorithm performs two calls to the \texttt{HOpt} procedure for $C^{-1}$ and $C$ respectively, which both contains $M$ gates.
The first call, with $C^{-1}$ given as input, will produce a circuit which contains $\tilde{h}$ number of Hadamard gates, such that $\tilde{h} \leq h$.
The second call, with $C$ given as input, will produce a circuit which contains a number of Hadamard gates that is equal to the number of internal Hadamard gates in the circuit produced by Algorithm~\ref{alg:circ_internal_h_opt}, and which is therefore less than or equal to $h$.
Hence, these two calls to Algorithm~\ref{alg:circ_h_opt} have a cost of $\mathcal{O}(nM + n^2h)$ operations.
The procedure \texttt{CliffordSynthesis} is also called two times, which induces a cost of $\mathcal{O}(n^3)$ operations.
Thus, the overall complexity of Algorithm~\ref{alg:circ_internal_h_opt} is $\mathcal{O}(nM + n^2h + n^3)$, which corresponds to $\mathcal{O}(nM + n^2h)$ in the typical case where $h > n$.

Note that the two calls to the \texttt{CliffordSynthesis} procedure can be avoided if the objective is to minimize the number of internal Hadamard gates in the circuit and not the number of Hadamard gates.
Indeed, the first call to the \texttt{HOpt} procedure will produce a circuit $\tilde{C}$ and a tableau $\mathcal{T}$ such that $\mathcal{T}$ is associated with the Clifford operator implemented by $\tilde{C}'$ where $\tilde{C}'$ is obtained by removing all the non-Clifford $R_Z$ gates from $\tilde{C}$.
Performing the synthesis of $\mathcal{T}$ will therefore produce a circuit that is equivalent to $\tilde{C}'$.
Consequently, instead of calling the procedure \texttt{CliffordSynthesis}, an equivalent circuit could be obtained by constructing $\tilde{C}'$ which can be done with $\mathcal{O}(nM)$ operations as $\tilde{C}$ contains $\mathcal{O}(nM)$ gates.
Of course, the drawbacks of this method are that $\tilde{C}'$ may not contain an optimal number of Hadamard gates and that the worst-case complexity would be greater than $\mathcal{O}(n^3)$ in the case where $M > n^2$.
The second call to \texttt{CliffordSynthesis} can also be avoided in a similar manner.
Indeed, $\mathcal{T}_f$ is associated with the Clifford operator implemented by $C_f = C'^{-1} :: C_{\tilde{\mathcal{T}}} :: C'_{out}$, where $C'$ and $C'_{out}$ are the circuits obtained by removing all the non-Clifford $R_Z$ gates from $C$ and $C_{out}$ respectively.
The circuit $C_f$ can then be constructed in $\mathcal{O}(nM)$ as the circuits $C'^{-1}$, $C_{\tilde{\mathcal{T}}}$ and $C'_{out}$ all contain $\mathcal{O}(nM)$ gates.
Thus, we can design an algorithm which produces a circuit $\hat{C}$ with a complexity of $\mathcal{O}(nM + n^2h)$, even in the case where $h < n$, and such that $\hat{C}$ is a re-synthesis of a Clifford$+R_Z$ circuit $C$ and implements the same sequence of Pauli rotations as $C$ but with a minimal number of internal Hadamard gates.

\section{Benchmarks}\label{sec:bench}

\begin{table}
\resizebox{1.0\columnwidth}{!}{
\begin{tabular}{lrrrrrrrrrrr} 
        \toprule
         & \multicolumn{3}{c}{\texttt{InternalHOpt}} && \multicolumn{3}{c}{\texttt{TMerge}~\cite{zhang2019optimizing} + \texttt{InternalHOpt}} && \multicolumn{3}{c}{\texttt{moveH}~\cite{de2020fast}} \\
        \cmidrule(lr){2-4} \cmidrule(lr){6-8} \cmidrule(lr){10-12}
        Circuit & $H$-count & $T$-count & Time (s) && $H$-count & $T$-count & Time (s) && $H$-count & $T$-count & Time (s) \\
        \midrule
        Tof$_3$ & 2 & 21 & 0.00 && 2 & 15 & 0.00 && 2 & 15 & 0.00 \\ 
        Tof$_4$ & 4 & 35 & 0.00 && 4 & 23 & 0.00 && 4 & 23 & 0.00 \\ 
        Tof$_5$ & 6 & 49 & 0.00 && 6 & 31 & 0.00 && 6 & 31 & 0.00 \\ 
        Tof$_{10}$ & 16 & 119 & 0.00 && 16 & 71 & 0.00 && 16 & 71 & 0.00 \\ 
        Barenco Tof$_3$ & 3 & 28 & 0.00 && 3 & 16 & 0.00 && 3 & 16 & 0.00 \\ 
        Barenco Tof$_4$ & 7 & 56 & 0.00 && 7 & 28 & 0.00 && 7 & 28 & 0.00 \\ 
        Barenco Tof$_5$ & 11 & 84 & 0.00 && 11 & 40 & 0.00 && 11 & 40 & 0.00 \\ 
        Barenco Tof$_{10}$ & 31 & 224 & 0.00 && 31 & 100 & 0.01 && 31 & 100 & 0.00 \\ 
        Mod5$_4$ & 0 & 28 & 0.00 && 0 & 8 & 0.00 && 0 & 8 & 0.00 \\ 
        VBE Adder$_3$ & 4 & 70 & 0.00 && 4 & 24 & 0.00 && 4 & 24 & 0.00 \\ 
        CSLA MUX$_3$ & 6 & 70 & 0.00 && 6 & 62 & 0.00 && 6 & 62 & 0.00 \\ 
        CSUM MUX$_9$ & 12 & 196 & 0.00 && 12 & 84 & 0.01 && 12 & 84 & 0.00 \\ 
        QCLA Com$_7$ & 18 & 203 & 0.00 && 18 & 95 & 0.01 && 18 & 95 & 0.00 \\ 
        QCLA Mod$_7$ & 58 & 413 & 0.00 && 58 & 237 & 0.02 && 58 & 237 & 0.00 \\ 
        QCLA Adder$_{10}$ & 25 & 238 & 0.00 && 25 & 162 & 0.01 && 25 & 162 & 0.00 \\ 
        Adder$_8$ & 41 & 399 & 0.00 && 37 & 173 & 0.02 && 41 & 215 & 0.01 \\ 
        Mod Adder$_{1024}$ & 304 & 1995 & 0.00 && 304 & 1011 & 0.12 && 304 & 1011 & 0.06 \\ 
        RC Adder$_6$ & 10 & 77 & 0.00 && 10 & 47 & 0.00 && 10 & 47 & 0.00 \\ 
        Mod Red$_{21}$ & 17 & 119 & 0.00 && 17 & 73 & 0.00 && 17 & 73 & 0.00 \\ 
        Mod Mult$_{55}$ & 3 & 49 & 0.00 && 3 & 35 & 0.00 && 3 & 35 & 0.00 \\ 
        GF$(2^4)$ Mult & 0 & 112 & 0.00 && 0 & 68 & 0.00 && 0 & 68 & 0.00 \\ 
        GF$(2^5)$ Mult & 0 & 175 & 0.00 && 0 & 115 & 0.01 && 0 & 115 & 0.00 \\ 
        GF$(2^6)$ Mult & 0 & 252 & 0.00 && 0 & 150 & 0.01 && 0 & 150 & 0.00 \\ 
        GF$(2^7)$ Mult & 0 & 343 & 0.00 && 0 & 217 & 0.02 && 0 & 217 & 0.01 \\ 
        GF$(2^8)$ Mult & 0 & 448 & 0.00 && 0 & 264 & 0.04 && 0 & 264 & 0.02 \\ 
        GF$(2^9)$ Mult & 0 & 567 & 0.00 && 0 & 351 & 0.05 && 0 & 351 & 0.03 \\ 
        GF$(2^{10})$ Mult & 0 & 700 & 0.00 && 0 & 410 & 0.07 && 0 & 410 & 0.04 \\ 
        GF$(2^{16})$ Mult & 0 & 1792 & 0.01 && 0 & 1040 & 0.43 && 0 & 1040 & 0.14 \\ 
        GF$(2^{32})$ Mult & 0 & 7168 & 0.05 && 0 & 4128 & 7.19 && 0 & 4128 & 0.98 \\ 
        GF$(2^{64})$ Mult & 0 & 28672 & 0.19 && 0 & 16448 & 125.07 && 0 & 16448 & 7.46 \\ 
        GF$(2^{128})$ Mult & 0 & 114688 & 1.20 && 0 & 65664 & 2294.64 && 0 & 65664 & 60.47 \\ 
        GF$(2^{256})$ Mult & 0 & 458752 & 8.22 && 0 & 262400 & 41474.34 && 0 & 262400 & 2922.20 \\ 
        GF$(2^{512})$ Mult & 0 & 1835008 & 53.85 && - & - & - && 0 & 1049088 & 59186.15 \\ 
        Adder$_{1024}$ & 2044 & 14322 & 3.57 && 2044 & 8184 & 31.08 && 2046 & 8184 & 6.12 \\ 
        Adder$_{2048}$ & 4092 & 28658 & 18.98 && 4092 & 16376 & 179.07 && 4094 & 16376 & 25.69 \\ 
        Adder$_{4096}$ & 8188 & 57330 & 90.46 && 8188 & 32760 & 1182.67 && 8190 & 32760 & 131.11 \\ 
        DEFAULT & 11936 & 62720 & 13.72 && 11936 & 39744 & 39.33 && 12030 & 39744 & 1602.60 \\ 
        Shor$_{4}$ & 9780 & 68320 & 0.21 && 5010 & 17052 & 5.91 && 9829 & 22514 & 77.52 \\ 
        Shor$_{8}$ & 69759 & 489741 & 1.74 && 35585 & 121341 & 158.91 && 69759 & 163827 & 6895.79 \\ 
        Shor$_{16}$ & 537630 & 3755115 & 15.80 && 312274 & 1042881 & 2821.94 && - & - & - \\ 
        Shor$_{32}$ & 4173389 & 29622691 & 172.98 && 387103 & 1303156 & 24150.54 && - & - & - \\ 
        \bottomrule
\end{tabular}}
\caption{Comparison of different methods for the optimization of the number of internal Hadamard gates.
The $H$-count corresponds to the number of internal Hadamard gates.
A blank entry indicates that the execution couldn't be carried out in less than a day.}\label{tab:bench}
\end{table}

We compare the performances of Algorithm~\ref{alg:circ_internal_h_opt}, the \texttt{InternalHOpt} procedure, to the \texttt{moveH} procedure presented in Reference~\cite{de2020fast} and which has a complexity of $\mathcal{O}(M^2)$ where $M$ is the number of gates in the input circuit.
Note that the \texttt{moveH} procedure does not include the $T$ gates reduction method of Reference~\cite{de2020fast} based on spider nest identities and which is normally performed once the number of Hadamard gates have been reduced.
The \texttt{moveH} procedure applies a sequence of rewriting rules on the circuit with the aim of reducing the number of internal Hadamard gates.
During this process the number of $T$ gates may also be reduced, which modifies the sequence of Pauli rotations implemented by the circuit.
This can then lead to a better reduction in the number of internal Hadamard gates than the one obtained when only the \texttt{InternalHOpt} procedure is performed.
Which is why, in order to better exploit the \texttt{InternalHOpt} procedure, it can be helpful to first execute an algorithm which can reduce the number of $T$ gates in the circuit quickly and efficiently.
The $T$-count reduction algorithms that are closest to these requirements are the provided in Reference~\cite{zhang2019optimizing} and in Reference~\cite{kissinger2020reducing}, these two algorithms have in fact been proven to be equivalent~\cite{simmons2021relating}.
The method used in these algorithms consists in merging the Pauli rotations in the sequence that are equivalent and that are not separated by another Pauli rotation with which they anticommute.
We implemented the algorithm provided in Reference~\cite{zhang2019optimizing} such that it is not increasing the number of gates in the circuit in order to not increase the execution time of the \texttt{InternalHOpt} procedure.
This procedure, which we refer to as \texttt{TMerge}, has a complexity of $\mathcal{O}(nM + nm^2)$ where $n$ is the number of qubits, $M$ is the number of gates in the input circuit and $m$ is the number of Pauli rotations.
If the $T$ gates reduction rules used in the \texttt{moveH} subroutine is only consisting in merging two adjacent $R_Z$ gates together, then we can infer that the number of $T$ gates in the circuit after \texttt{moveH} procedure has been performed is always higher or equal to the number of $T$ gates in the circuit after the \texttt{TMerge} procedure has been performed; this is corroborated by the results of our benchmarks.

We evaluate the different methods on a set of commonly used circuits which were obtained from Reference~\cite{amyGithub} and Reference~\cite{reversibleBenchmarks}. 
We extended the set of circuits over which the benchmarks are performed by adding larger quantum circuit to better test the scalability of the different approach on various types of circuits.
We added large adders circuits which are performing an addition over two registers of size $1024$, $2048$ and $4096$ qubits, the implementation of these circuits is based on Reference~\cite{takahashi2010quantum}.
We also added a circuit, given in Reference~\cite{default}, that is an implementation of the block cipher DEFAULT.
Finally, we added quantum circuits implementing the modular exponentiation part of Shor's algorithm for number factoring over 4, 8, 16 and 32 bits.

The \texttt{TMerge} and \texttt{InternalHOpt} procedures were implemented with the Rust programming language, while the \texttt{moveH} procedure was extracted from the implementation realized in Haskell by the authors of the method~\cite{stompCode}.
Our implementation of the \texttt{InternalHOpt} procedure used for the benchmarks is publicly available~\cite{github}, along with the circuits used in the benchmarks and which have a reasonable size.
The operations performed by the \texttt{InternalHOpt} algorithm mostly consist in bitwise operations between vectors in order to update the tableau.
Thus, our algorithm can greatly benefits from SIMD (Same Instruction Multiple Data) instructions which enable the simultaneous execution of some of these bitwise operations.
This have for example been used in the CHP stabilizer circuit simulator~\cite{aaronson2004improved}.
We also exploit this concept in our implementation of the \texttt{InternalHOpt} procedure by using 256 bit wide Advanced Vector Extensions (AVX).\\

\noindent\textbf{Benchmarks analysis.}
The results of our benchmarks are presented in Table~\ref{tab:bench}.
We can notice that the \texttt{InternalHOpt} procedure outperforms the \texttt{moveH} procedure in term of execution time on some circuits of large size.
For instance, the Shor$_{32}$ circuit was optimized in $173$ seconds by the \texttt{InternalHOpt} procedure while the two other methods did not succeed in optimizing the circuit in less than a day.
However, the \texttt{InternalHOpt} procedure alone does not always achieve the best results in the number of internal Hadamard gates.
For the set of circuits and methods considered, the method achieving the best results in term of internal Hadamard gates is the $\texttt{TMerge} + \texttt{InternalHOpt}$ approach.
Indeed, the $\texttt{TMerge} + \texttt{InternalHOpt}$ approach always leads to a number of internal Hadamard gates that is lower or equal to the numbers obtained by the \texttt{moveH} procedure.
This fact also holds for the number of $T$ gates.
However, for some circuits, the performances of the \texttt{moveH} procedure and the $\texttt{TMerge} + \texttt{InternalHOpt}$ approach are similar with respect to the $H$-count and $T$-count metrics, but the execution time of the \texttt{moveH} procedure is much lower.
This is notably the case for the adder circuits of large size.
These adder circuits have a low depth and a high number of qubits, which is far from the ideal case for $\texttt{TMerge} + \texttt{InternalHOpt}$ approach since the complexity of both procedures is dependent on the number of qubits.
On the contrary, the \texttt{moveH} procedure is not affected by the number of qubits as it has a complexity of $\mathcal{O}(M^2)$ where $M$ is the number of gates within the circuit.
This explains why the \texttt{moveH} procedure is competitive for these adder circuits and has an execution time that is close to the one of the \texttt{InternalHOpt} procedure.

Another series of circuits for which the \texttt{moveH} procedure is much faster than the $\texttt{TMerge} + \texttt{InternalHOpt}$ approach are the ``GF($2^n$) Mult'' circuits.
This behaviour can be explained by analyzing the structure of the ``GF($2^n$) Mult'' circuits and the design of the \texttt{TMerge} algorithm.
The ``GF($2^n$) Mult'' circuits are all implementing a sequence of Pauli rotations that are mutually commuting, which is why no internal Hadamard gate is required for these circuits.
In the worst case, for every pair of Pauli rotations, the \texttt{TMerge} procedure will check whether two Pauli rotations commute or not.
This routine, which seems unnecessary in the case where we know that the Pauli rotations are all mutually commuting, is particularly expensive for the ``GF($2^n$) Mult'' circuits for which $n$ is high since the number of Pauli rotations increases drastically with respect to $n$.\\

\noindent\textbf{Outlook.}
Our primary motivation for optimizing the number of internal Hadamard gate is to foster the minimization of $T$-gates.
Conversely, our benchmarks show that optimizing the number of $T$-gate leads to better minimization in the number of internal Hadamard gates.
This interdependence between the $T$-count and $H$-count minimization problems could lead us to think that a second round of $T$-count optimization followed by a $H$-count optimization could lead to a lower number of internal Hadamard gates.
Our investigations on that second round of optimization have not be fruitful as we did not succeed in reducing the number of internal Hadamard gates below the numbers obtained by the $\texttt{TMerge} + \texttt{InternalHOpt}$ approach.
It seems that once the \texttt{TMerge} procedure has been performed, it becomes difficult to modify the underlying sequence of Pauli rotations in such a way that it enables further reduction in the number of internal Hadamard gates.
Our conclusion here is only based on some of our tests, more investigations with a wide variety of $T$-count optimizers should be performed to know whether or not this second round of optimization could lead to an improvement in the number of internal Hadamard gates.

Two lines of investigations on how to perform the optimization of internal Hadamard gates more efficiently can be drawn out from these benchmarks.
Firstly, the \texttt{TMerge} procedure is outperformed, with respect to the execution time, by the \texttt{moveH} procedure on some circuits such as the ``GF($2^n$) Mult'' circuits, can the complexity of the \texttt{TMerge} procedure be improved so that it is more competitive on these circuits?
Secondly, is it possible to design an algorithm similar to the \texttt{moveH} procedure, so that it has approximatively the same execution time, but which systematically obtains the same number of $T$ gates as the \texttt{TMerge} procedure and which optimally minimizes the number of internal Hadamard gates in the resulting sequence of Pauli rotations as done by the \texttt{InternalHOpt} procedure?

\section{Conclusion}
We presented an algorithm to realize the synthesis of a sequence of Pauli rotations over the $\{X, \mathrm{CNOT}, S, H, R_Z\}$ gate set using a minimal number of Hadamard gates and with a time complexity of $\mathcal{O}(n^2m)$, in the typical case where $n \leq m$, and where $n$ is the number of qubits and $m$ is the number of Pauli rotations.
A closely related problem is to optimize a Clifford$+R_Z$ circuit so that the sequence of Pauli rotations it is implementing contains a minimal number of internal Hadamard gates, where a Hadamard gate is called internal if it is comprised between the first and last non-Clifford $R_Z$ gates of the circuit.
Solving this problem is important to improve the efficiency and scalability of algorithms minimizing the number of non-Clifford $R_Z$ gates such as $T$-count optimizers, and to minimize the additional cost that comes with the Hadamard gates gadgetization procedure.
In Reference~\cite{de2020fast}, the authors raised the question of whether this problem is solvable in $\mathcal{O}(M^2 \text{poly}\log(M))$ time where $M$ is the number of gates in the input circuit.
We answer this question positively, in the case where $n \leq M/\sqrt{h}$ and for a fixed sequence of Pauli rotations by providing an algorithm solving this problem with a time complexity of $\mathcal{O}(nM + n^2h)$ where $n$ is the number of qubits, $M$ is the number of gates in the input circuit and $h$ is the number of Hadamard gates within the optimized circuit.

Our algorithms are optimal for a given sequence of Pauli rotations, however there may exist other sequences of Pauli rotations, associated with the same operator, which could be implemented with fewer Hadamard gates.
An open problem is to find a sequence of Pauli rotations $\mathcal{S}$ implementing a given unitary gate up to a Clifford operator such that $\rank(A^{(\mathcal{S})})$ is minimal, where $A^{(\mathcal{S})}$ is the commutativity matrix associated with $\mathcal{S}$ as defined in Section~\ref{sec:h_opt}.
Should there exist an algorithm solving this problem in reasonable time, then it could be used in conjunction with our algorithms to implement a unitary gate over the Clifford$+R_Z$ gate set with a minimal number of internal Hadamard gates.

\section*{Acknowledgments}
We acknowledge funding from the Plan France 2030 through the projects NISQ2LSQ ANR-22-PETQ-0006 and EPIQ ANR-22-PETQ-007. 

\bibliographystyle{unsrt}
\bibliography{ref.bib}

\appendix

\section{Diagonalization network synthesis example}\label{app:example}

In this section we provide a detailed execution example of Algorithm~\ref{alg:diagonalization} which performs the synthesis of a diagonalization network for a given sequence of Pauli products.
Let $\mathcal{S}$ be the sequence of Pauli products given as input to Algorithm~\ref{alg:diagonalization} and defined as follows:

\begin{equation*}
    \arraycolsep=4.0pt
    \mathcal{S} = \begin{bmatrix}\mathcal{Z} \\ \mathcal{X}\end{bmatrix} = 
    \left(\begin{array}{cccc}
            1 & 1 & 1 & 0 \\
            0 & 1 & 0 & 1 \\
            0 & 1 & 0 & 0 \\\hline
            1 & 0 & 1 & 1 \\
            0 & 1 & 1 & 0 \\
            0 & 0 & 1 & 1 \\
    \end{array}\right).
\end{equation*}

The algorithm starts by diagonalizing the Pauli product represented by the first column of $\mathcal{S}$.
This is done by inserting a $S$ gate in the circuit followed by a Hadamard gate on the first qubit.
The matrix $\mathcal{S}$ encoding the sequence of Pauli products is updated by performing the operations associated with the $S$ and $H$ gates, as depicted in Figure~\ref{fig:clifford_operations}.
Then, the first column is removed from the matrix and the algorithm performs a recursive call on the updated matrix.

\begin{figure}[h]
\begin{subfigure}{0.69\textwidth}
\centering
\begin{quantikz}[column sep=0.35cm, row sep=0.5cm]
    \qw & \gate{S} & \gate{H} & \qw \\
    \qw & \qw & \qw & \qw \\
    \qw & \qw & \qw & \qw \\
\end{quantikz}
\end{subfigure}
\begin{subfigure}{0.3\textwidth}
\begin{equation*}
    \arraycolsep=4.0pt
    \left(\begin{array}{cccc}
            0 & 1 & 1 \\
            1 & 0 & 1 \\
            1 & 0 & 0 \\\hline
            1 & 0 & 1 \\
            1 & 1 & 0 \\
            0 & 1 & 1 \\
    \end{array}\right)
\end{equation*}
\end{subfigure}
\end{figure}

This time, the first column of the lower matrix has a Hamming weight greater than one.
Therefore, the algorithm inserts a CNOT gate acting on the first and second qubits of the circuit to reduce the Hamming weight of the first column of the lower matrix to one.
The Pauli product encoded by the first column can then be diagonalized by inserting a $S$ gate and a Hadamard gate on the first qubit.
Then, the matrix is updated, the first column is removed from the matrix and the algorithm performs a recursive call.

\begin{figure}[h]
\begin{subfigure}{0.69\textwidth}
\centering
\begin{quantikz}[column sep=0.35cm, row sep=0.5cm]
    \qw & \gate{S} & \gate{H} & \ctrl{1} & \gate{S} & \gate{H} & \qw \\
    \qw & \qw & \qw & \targ{} & \qw & \qw & \qw \\
    \qw & \qw & \qw & \qw & \qw & \qw & \qw \\
\end{quantikz}
\end{subfigure}
\begin{subfigure}{0.3\textwidth}
\begin{equation*}
    \arraycolsep=4.0pt
    \left(\begin{array}{cccc}
            0 & 1 \\
            0 & 1 \\
            0 & 0 \\\hline
            1 & 1 \\
            1 & 1 \\
            1 & 1 \\
    \end{array}\right)
\end{equation*}
\end{subfigure}
\end{figure}

Again, the first column of the lower matrix has a Hamming weight greater than one.
This time two CNOT gates must be inserted in the circuit to reduce it to one.
After that, a Hadamard gate is inserted to diagonalize the Pauli product encoded by the first column.\\~\\

\begin{figure}[h]
\begin{subfigure}{0.69\textwidth}
\centering
\begin{quantikz}[column sep=0.35cm, row sep=0.5cm]
    \qw & \gate{S} & \gate{H} & \ctrl{1} & \gate{S} & \gate{H} & \ctrl{1} & \ctrl{2} & \gate{H} & \qw \\
    \qw & \qw & \qw & \targ{} & \qw & \qw & \targ{} & \qw & \qw & \qw \\
    \qw & \qw & \qw & \qw & \qw & \qw & \qw & \targ{} & \qw & \qw \\
\end{quantikz}
\end{subfigure}
\begin{subfigure}{0.3\textwidth}
\begin{equation*}
    \arraycolsep=4.0pt
    \left(\begin{array}{cccc}
            1 \\
            1 \\
            0 \\\hline
            0 \\
            0 \\
            0 \\
    \end{array}\right)
\end{equation*}
\end{subfigure}
\end{figure}

Finally, the Pauli product encoded by the remaining column is already diagonal.
Therefore, the algorithm simply removes the column from the matrix.
The matrix is then empty so the algorithm terminates by returning the constructed circuit, which is a diagonalization network for the sequence of Pauli products encoded by $\mathcal{S}$.

We can then insert $\{\mathrm{CNOT}, R_Z\}$ subcircuits in the appropriate places to implement the sequence of Pauli rotations associated with $\mathcal{S}$ up to a final Clifford circuit.
The following figure shows an example of a possible circuit obtained after this procedure.

\begin{figure}[h]
    \centering
\begin{quantikz}[column sep=0.35cm, row sep=0.5cm]
    \qw & \gate{S} & \gate{H} & \gate{R_Z} & \ctrl{1} & \gate{S} & \gate{H} & \targ{} & \targ{} & \gate{R_Z} & \targ{} & \targ{} & \ctrl{1} & \ctrl{2} & \gate{H} & \gate{R_Z} & \targ{} & \gate{R_Z} & \targ{} & \qw \\
    \qw & \qw & \qw & \qw & \targ{} & \qw & \qw & \ctrl{-1} & \qw & \qw & \qw & \ctrl{-1} & \targ{} & \qw & \qw & \qw & \ctrl{-1} & \qw & \ctrl{-1} & \qw \\
    \qw & \qw & \qw & \qw & \qw & \qw & \qw & \qw & \ctrl{-2} & \qw & \ctrl{-2} & \qw & \qw & \targ{} & \qw & \qw & \qw & \qw
    & \qw & \qw \\
\end{quantikz}
\end{figure}

The commutativity matrix $A^{(\mathcal{S})}$ associated with $\mathcal{S}$ is 
\begin{equation*}
    A^{(\mathcal{S})} = 
    \left(\begin{array}{cccc}
            0 & 1 & 0 & 1 \\
            0 & 0 & 1 & 1 \\
            0 & 0 & 0 & 0 \\
            0 & 0 & 0 & 0 \\
    \end{array}\right).
\end{equation*}

As stated by Theorem~\ref{thm:alg_h}, we can notice that the number of Hadamard gates in the circuit produced by Algorithm~\ref{alg:diagonalization} is equal to $\rank(M)$ where $M = \begin{bmatrix} \mathcal{X} \\ A^{(\mathcal{S})} \end{bmatrix}$.

\end{document}